\documentclass[12pt]{article}

\usepackage{geometry}
\usepackage{yfonts}
\usepackage{subfigure}

\usepackage{amsthm}

\newcounter{mycou}

\usepackage{cancel}

\usepackage{pict2e}

\makeatletter
\newcommand{\CCOp}{\mathord{\mathpalette\nicoud@YESNO{\nicoud@path{\fillpath}}}}
\newcommand{\nicoud@YESNO}[2]{%
  \begingroup
  \settoheight{\unitlength}{$#1X$}%
  \begin{picture}(0.7,1)
  \linethickness{\variable@rule{#1}}%
  \roundcap\roundjoin
  \nicoud@path{\strokepath}
  #2
  \Line(0.35,-0.35)(0.35,1.08)
  \Line(0.55,-0.35)(0.55,1.08)
  \end{picture}%
  \endgroup
}
\newcommand{\nicoud@path}[1]{%
  \moveto(0.5,0.9)
  \lineto(0.5,1)\lineto(0.6,1)\lineto(0.6,0.9)
  \closepath
  \moveto(0.3,0.9)
  \lineto(0.3,1)\lineto(0.4,1)\lineto(0.4,0.9)
  \closepath
  \moveto(0.5,-0.27)
  \lineto(0.5,-0.17)\lineto(0.6,-0.17)\lineto(0.6,-0.27)
  \closepath
  \moveto(0.3,-0.27)
  \lineto(0.3,-0.17)\lineto(0.4,-0.17)\lineto(0.4,-0.27)
  \closepath
  #1
}
\newcommand{\variable@rule}[1]{%
  \fontdimen8  
  \ifx#1\displaystyle\textfont3\else
    \ifx#1\textstyle\textfont3\else
      \ifx#1\scriptstyle\scriptfont3\else
        \scriptscriptfont3\relax
  \fi\fi\fi
}
\makeatletter

\makeatletter
\newcommand{\subalign}[1]{%
  \vcenter{%
    \Let@ \restore@math@cr \default@tag
    \baselineskip\fontdimen10 \scriptfont\tw@
    \advance\baselineskip\fontdimen12 \scriptfont\tw@
    \lineskip\thr@@\fontdimen8 \scriptfont\thr@@
    \lineskiplimit\lineskip
    \ialign{\hfil$\m@th\scriptstyle##$&$\m@th\scriptstyle{}##$\hfil\crcr
      #1\crcr
    }%
  }%
}
\makeatother

 \newtheoremstyle{theoremdd}
  {0}
  {0}
  {\itshape}
  {0pt}
  {\bfseries}
  {}
  { }
  {\thmname{#1}\thmnumber{ #2}\textnormal{\thmnote{ (#3)}}}

\theoremstyle{theoremdd}

\usepackage{setspace,multirow}

\makeatletter
\newcommand*{\rom}[1]{\expandafter\@slowromancap\romannumeral #1@}
\makeatother

\usepackage{lmodern,enumerate,rotating,anyfontsize}
\usepackage{amsmath,amsfonts,amssymb,mathtools,blkarray,soul,etoolbox}
\usepackage{tikz-cd}
\usepackage{tikz-network}

\makeatletter
\newcommand{\mylabel}[2]{#2\def\@currentlabel{#2}\label{#1}}
\makeatother

\usetikzlibrary{cd}
\usetikzlibrary{shapes,shapes.geometric,arrows,fit,calc,positioning,automata}
\usetikzlibrary {circuits.logic.IEC}

\usepackage{stmaryrd}
\SetSymbolFont{stmry}{bold}{U}{stmry}{m}{n}

\newcommand{\blue}{\color{blue}}
\definecolor{green}{rgb}{0.1,0.7,0.1}

\newcommand{\cyan}{\color{cyan}}

\definecolor{Pantone540}{cmyk}{1 0.55 0 0.55}
\definecolor{Pantone301}{cmyk}{1 0.45 0 0.18}
\definecolor{Pantone285}{cmyk}{0.89 0.43 0 0}
\definecolor{Pantone542}{cmyk}{0.62 0.22 0 0.03}
\definecolor{Pantone283}{cmyk}{0.35 0.09 0 0}

\definecolor{Pantone383}{cmyk}{0.2 0 1 0.19}
\definecolor{Pantone158}{cmyk}{0 0.61 0.97 0}
\definecolor{Pantone7527}{cmyk}{0 0.02 0.06 0.07}

\DeclareMathOperator{\pre}{pre}

\DeclareMathOperator{\CC}{CC}

\DeclareMathOperator{\last}{last}
\DeclareMathOperator{\obs}{obs}

\DeclareMathOperator{\WEIGHT}{WT}

\allowdisplaybreaks

\newcommand{\Z}{\mathbb{Z}}
\newcommand{\R}{\mathbb{R}}
\newcommand{\N}{\mathbb{N}}
\newcommand{\Q}{\mathbb{Q}}
\newcommand{\LM}{\mathcal{L}}

\newcommand{\Acal}{\mathcal{A}}
\newcommand{\Acalf}{\mathcal{A}_{\mathsf{f}}}
\newcommand{\Acalnf}{\mathcal{A}_{\mathsf{nf}}}
\newcommand{\Rcal}{\mathcal{R}}
\newcommand{\dt}{\delta}

\newcommand{\ep}{\epsilon}

\newcommand{\Scal}{\mathcal{S}}

\newcommand{\Sig}{\Sigma}
\newcommand{\s}{\sigma}

\newcommand{\Mt}{\mathcal{M}}

\newcommand{\llb}{\llbracket}
\newcommand{\rrb}{\rrbracket}

\newcommand{\fsf}{\mathsf{f}}
\newcommand{\Ef}{{\cyan E_f}}

\newcommand{\ef}{{\cyan e_f}}

\newcommand{\Enf}{E_{\mathsf{nf}}}
\newcommand{\Eonf}{E_{o\mathsf{nf}}}
\newcommand{\Euonf}{E_{uo\mathsf{nf}}}
\newcommand{\Eof}{E_{o\mathsf{f}}}
\newcommand{\Euof}{E_{uo\mathsf{f}}}

\newcommand{\nf}{\mathsf{nf}}

\DeclareMathOperator{\init}{init}

\newtheorem{theorem}{Theorem}[section]
\newtheorem{definition}{Definition}

\newtheorem{remark}{Remark}

\usepackage{times,amssymb,amsfonts,mathtools,graphicx,tikz,algorithm,algorithmic,url,hhline,booktabs}
\usetikzlibrary{automata,arrows,positioning}

\newcounter{couRTA}
\usepackage{aliascnt}

\theoremstyle{plain}

\newaliascnt{lemma}{theorem}
\newtheorem{lemma}[lemma]{Lemma}
\aliascntresetthe{lemma}

\newaliascnt{proposition}{theorem}
\newtheorem{proposition}[proposition]{Proposition}
\aliascntresetthe{proposition}

\newaliascnt{corollary}{theorem}

\aliascntresetthe{corollary}

\newaliascnt{example}{theorem}
\newtheorem{example}[example]{Example}
\aliascntresetthe{example}

\usetikzlibrary{shadows}

\makeatletter
\let\NAT@parse\undefined
\makeatother
\usepackage[colorlinks,linkcolor=blue,anchorcolor=blue,citecolor=green,urlcolor=cyan,hyperindex]{hyperref}

\newcommand{\PSPACE}{\mathsf{PSPACE}}

\newcommand{\EXPTIME}{\mathsf{EXPTIME}}
\newcommand{\PTIME}{\mathsf{P}}
\newcommand{\EXPSPACE}{\mathsf{EXPSPACE}}

\newcommand{\NP}{\mathsf{NP}}
\newcommand{\coNP}{\mathsf{coNP}}

\usetikzlibrary{automata,arrows,positioning}
\usetikzlibrary{petri}

\tikzset{elliptic state/.style={draw, ellipse, thick, fill=gray!10}
}

\tikzset{rectangular state/.style={draw, rectangle, thick, fill=gray!10}
}

\tikzset{emptystate/.style={}
}

\tikzset{
    partial ellipse/.style args={#1:#2:#3}{
        insert path={+ (#1:#3) arc (#1:#2:#3)}
    }
}

\tikzset{
node distance=3cm, 
every state/.style={thick, fill=gray!10}, 
initial text=$ $, 
}

\newcommand{\preobserver}{\Acal_{\obs}^{\pre}}
\newcommand{\preobserverdt}{\dt_{\obs}^{\pre}}

\newcommand{\observer}{\Acal_{\obs}}
\newcommand{\observerdt}{\dt_{\obs}}



\usepackage{tcolorbox}
\usepackage{tabularx}
\usepackage{array}
\usepackage{colortbl}
\tcbuselibrary{skins}

\title{Resilience in labeled real-time automata}

\author{Kuize Zhang\\
{\small School of Mathematics and Statistic}\\
{\small Xi'an Jiaotong University, Xi'an, China}\\
{\small kuize.zhang@xjtu.edu.cn}
}

\begin{document}

\date{}

\maketitle

{\bf Abstract}
  In this paper, we characterize \emph{resilience} for a labeled real-time automaton (LRTA). An LRTA is resilient
  if whenever a faulty event occurs, after sufficiently many events occur, the LRTA returns to normalcy and the
  occurrence of the faulty event is not leaked. The notion of resilience reflects the ability of an LRTA recovering
  from a faulty behavior, and hence can model an intelligent agent. We formulate one definition of resilience
  for an LRTA and give verification algorithms for the definition based on two basic tools --- concurrent
  composition and observer --- proposed and computed in the author's previous papers [Yucan Yan, Kuize Zhang.
  Formulation of concurrent composition in labeled real-time automata
  with an application to fault diagnosis, 14th United Kingdom Automatic Control Council
  (UKACC) International Conference on Control, April, 10–12, 2024, Winchester, United Kingdom, 281--286] and
  [Kuize Zhang. State-based opacity of labeled real-time automata, Theoretical Computer Science,
  987 (2024), 114373 (1--20).].

{\bf Keywords}
resilience, labeled real-time automaton, verification, concurrent composition, observer

\tableofcontents

\section{Introduction}

\subsection{Background}

A partially-observed (aka labeled) dynamical system is \emph{resilient} if whenever
a faulty event occurs, the system will return to normalcy after sufficiently long
time and the occurrence of the faulty event is not leaked. A resilient system can be
regarded to be intelligent, because it is somehow not affected by external attack that
may cause fault.

The importance of resilience not only lies in that it is itself important,
but also lies in that it has relations with many important properties.
For example, in labeled finite-state automata (LFSAs), resilience has strong relations with the property --- diagnosability \cite{Sampath1995DiagnosabilityDES}, where the latter means that the occurrence of a faulty event can be
detected after sufficiently many events occur by observing the generated label sequence.
The diagnosability is not only important in computer science and automatic control \cite{Haar2017ComplexityActiveDiagnosis,Tripakis2002DiagnosisTimedAutomata,Cassez2012ComplexityCodiagnosability,Zhang2024DiagnosisDpAutomata} but also important in artificial intelligence \cite{Schumann2008Diagnosability_AAAI,Ibrahim2017DiagnosabilityDES_AAAI,Bittner2022DiagnosabilityFairTranSystem}, because an intelligent system should have the ability 
of detecting occurrences of its faults. From this point of view, resilience is somehow opposite to diagnosability, because resilience shows insensitivity to fault.
Resilience also has strong relations with opacity \cite{Mazare2004Opacity,Bryans2005OpacityPetriNet,Cassez2009DynamicOpcity,Saboori2007CurrentStateOpacity}, where the latter means
a visit of a secret state is not leaked by observing generated labels.
Opacity has important applications to cyber-security and cyber-privacy problems, e.g.,
chemical engineering and voting systems \cite{Bryans2008OpacityTransitionSystems},
encryption using pseudo random generators and tracking of mobile agents in sensor networks \cite{Saboori2010PhDThesisOpacity}, 
etc.

Although the notion of resilience is important,
there are very few formal definitions of resilience in automata. 
  In \cite{Fabre2022ResilienceDES}, several definitions of resilience were formulated
for LFSAs and their verification problems were proven to be $\PSPACE$-complete; in addition,
relations between resilience and diagnosability and between resilience and opacity were also revealed. In \cite{Akshay2021ResilienceTimedSystems}, two definitions of timed $K$-$\forall$-resilience and timed $K$-$\exists$-resilience were formulated for (fully-observed) timed automata, and their decision problems were proven to be undecidable and belong to $\EXPSPACE$, respectively; for integer reset timed automata (a subclass of timed automata), the decision problem for timed $K$-$\forall$-resilience was proven to be $\EXPSPACE$-complete. It was also proven that deciding untimed $K$-$\forall$-resilience and untimed $K$-$\exists$-resilience are $\EXPSPACE$-complete and $\PSPACE$-complete in timed automata in \cite{Akshay2021ResilienceTimedSystems}.
Physically, the definitions of resilience in \cite{Fabre2022ResilienceDES,Akshay2021ResilienceTimedSystems} both require that whenever faulty events occur, the automata will
return to normalcy after sufficiently long time (although the normalcy was formulated in  different ways in these two papers), 
and additionally, the definitions in \cite{Fabre2022ResilienceDES} also require that the occurrences of faulty events are not leaked. 
Therefore, the definitions of resilience formulated in \cite{Fabre2022ResilienceDES} are more reasonable.

A \emph{labeled real-time automaton} (LRTA) is widely used to model real-time systems. Although the LRTAs are a subclass of labeled timed automata and have strictly lower expressive power than labeled timed automata, the former preserve many decidable properties of  LFSAs because of the computable two basic tools --- \emph{concurrent composition} \cite{Yan2024DiagnosisReal-TimeAutomata} and \emph{observer} \cite{Zhang2024OpacityReal-TimeAutomata}.
In this paper, we formulate and characterize a definition of resilience for LRTAs by the two tools.
The resilience studied in the current paper is inspired and extended from the definitions given in \cite{Fabre2022ResilienceDES}.

\subsection{Review on the two basic tools needed}

We first review the two basic tools --- concurrent composition and observer in LFSAs, and then
review their nontrivial extensions in LRTAs.

The concurrent-composition method in LFSAs 
proposed in \cite{Zhang2020DetPNFA} by 
characterizing the negation of strong detectability provides a unified and currently the most efficient method
to verifying 
all kinds of inference-based properties\footnote{One can use an observed label sequence to determine internal information.} such as strong versions of detectability,
diagnosability, and predictability, for arbitrary LFSAs, while previous verification methods 
(e.g., the detector \cite{Shu2007Detectability_DES}, the twin-plant \cite{Jiang2001PolyAlgorithmDiagnosabilityDES}, and the verifier \cite{Yoo2002DiagnosabiliyDESPTime,Genc2009PredictabilityDES})
only apply to one type of such properties and they depend on 
two assumptions of deadlock-freeness and divergence-freeness, see \cite{Zhang2023UnifiedFrame4DES} for details.
Apart from verifying inference-based properties, when used in combination with the
observer\footnote{Originally
proposed in \cite{RabinScott1959PowersetConstruction} for determinizing nondeterministic finite
automata with $\varepsilon$-transitions. The terminology ``observer'' dates back to 
\cite{Ozveren1990ObservabilityDES,Shu2007Detectability_DES}.}, the concurrent composition provides
a unified and currently the most efficient method to verifying all kinds of concealment-based properties\footnote{One cannot use an observed label sequence to determine internal information.} such as 
standard versions of state-based opacity and strong versions of state-based opacity in LFSAs
\cite{Zhang2023UnifiedFrame4DES,Han2023StrongCSOpacity_DES}.
Comparisons of time complexities of different verification methods for standard versions of state-based opacity
and strong versions of state-based opacity in LFSAs were shown in \cite[Table~4 and Table~5]{Zhang2023UnifiedFrame4DES}.
In addition, the concurrent-composition method was used in \cite{Miao2025StrongInitialFinalOpacity} to verify
several kinds of strong initial-and-final-state opacity in LFSAs.
As a summary, the concurrent-composition method has fundamentally changed the state-of-the-art
of discrete-event systems modeled by LFSAs \cite{Zhang2023UnifiedFrame4DES}.
The concurrent composition of two LFSAs is computable in $\PTIME$ \cite{Zhang2020DetPNFA} while the observer of an LFSA is computable in $\EXPTIME$ \cite{RabinScott1959PowersetConstruction}. Along this line of using the concurrent composition and the observer to uniformly 
characterize inference-based properties and concealment-based properties in LFSAs,
if the two tools could be extended to LRTAs, then plenty of decidable properties obtained in LFSAs in the past around 30 years can be extended to LRTAs.
Such nontrivial extensions have become feasible because we have extended the two 
tools to LRTAs.

The concurrent composition was nontrivially extended to
LRTAs \cite{Yan2024DiagnosisReal-TimeAutomata} and the observer was nontrivially 
extended to LRTAs \cite{Zhang2024OpacityReal-TimeAutomata},
where the concurrent-composition computation problem was proven to be $\NP$-complete 
\cite{Yan2024DiagnosisReal-TimeAutomata} and the observer was proven to be 
computable in $\mathsf{N}$2$\EXPTIME$ \cite{Zhang2024OpacityReal-TimeAutomata}.
In \cite{Yan2024DiagnosisReal-TimeAutomata}, a notion of diagnosability was formulated and verified based on the
concurrent composition, and the diagnosability verification problem
was proven to be $\coNP$-complete. In \cite{Zhang2024OpacityReal-TimeAutomata}, several kinds of state-based opacity were formulated and verified based on the observer in 
$\mathsf{N}$2$\EXPTIME$.

In \cite{Zhang2022DetWAMonoid,Zhang2023DESbook}, a new class of automata called
labeled weighted automata over monoids were proposed, 
the definition of concurrent composition was nontrivially
extended to
labeled weighted automata over monoids, and necessary and sufficient conditions for strong versions of
detectability
were given based on the concurrent composition; particularly, for such automata over the monoid $(\Q^k,+)$,
the concurrent-composition computation problem was proven to be $\NP$-complete, and
the strong detectability verification problem was proven to be $\coNP$-complete. Later, 
in \cite{Miao2023VerifyDetectabilityUWASelf-Composition}, the concurrent composition was used to verify 
two strong versions of detectability of labeled unambiguous weighted automata over rational semirings
in exponential time (but the authors wrongly claimed that their algorithms ran in polynomial time,
they actually only considered the contribution of states to computational complexity but neglected
the contribution of the weights which are rational numbers). 
The observer was extended to labeled weighted automata and particularly computed in 
labeled weighted automata over the monoid $(\Q^k,+)$, and was used to verify weak detectability.
From the complexities of computing the two basic tools in LRTAs and in labeled weighted automata over the monoid $(\Q^k,+)$,
one can see similarities between these two 
classes of automata, although they perform remarkably different behavior.

With these two basic tools, we can introduce the main results in the current paper.

\subsection{Contribution}

In this paper, we formulate one definition of resilience for an LRTA, which intuitively means whenever a faulty event occurs, the system will return to normalcy after sufficiently many events occur and
the occurrence of the faulty event is not leaked. Based on our concurrent composition and observer, we give a necessary and sufficient condition for the negation of the resilience. Then, we prove that the negation of the resilience can be verified in $\mathsf{N}$2$\EXPTIME$ in the size of an LRTA.

\subsection{Structure of the paper}

In \autoref{sec:prelim}, we show necessary knowledge needed to show our results, including notation, the definitions on LRTAs, the concurrent composition \cite{Yan2024DiagnosisReal-TimeAutomata} and the observer \cite{Zhang2024OpacityReal-TimeAutomata}. In \autoref{sec:mainresult},
we show the main results of the current paper. \autoref{sec:conc} is a short conclusion.

\section{Preliminaries}
\label{sec:prelim}

\subsection{Notation}

For an alphabet $\Sigma$, $\Sigma^*$ denotes the set of finite strings (including the empty string
$\epsilon $) over $\Sigma$. Elements of an alphabet are called \emph{letters}.
Denote $\Sigma^+:=\Sigma^* \setminus \left\{\epsilon \right \}$.
For a string $w\in\Sig^*$, $|w|$ denotes its length, that is, the number of
elements of $\Sigma$, counting repetitions, occurring in $w$.
For a string $s_1\dots s_n$, where $s_1,\dots,s_n$ are letters, $s_1=:\init(s_1\dots s_n)$,
$s_n=:\last(s_1\dots s_n)$. 
As usual, $\R,\Q,\Z,\N$ denote the sets of real numbers, rational numbers, integers, nonnegative integers, 
respectively.
$\R_{\ge0}$ and $\R_+$ denote the sets of nonnegative real numbers and positive real numbers, respectively.
$\Q_{\ge0},\Q_+,\Z_{\ge0},\Z_+$ have analogous meanings.
$\llb m,n \rrb$ denotes the set of integers no less than 
$m$ and no greater than $n$. 
$\subset$ denotes set inclusion and $\subsetneq$ denotes strict set inclusion.

\subsection{Labeled real-time automata}
LRTAs are a special type of labeled timed automata. In an LRTA, there is only
one clock and the clock is reset upon each transition's execution.

An LRTA is a septuple 
\begin{align}
  \Acal=(Q,E,Q_0,\Delta,\mu,\Sig,\ell),
\end{align}
where $Q$ is a nonempty finite set of \emph{states}, $E$ is a finite \emph{alphabet} of \emph{events},
$Q_0\subset Q$ is a nonempty set of \emph{initial states}, $\Delta\subset Q\times E\times Q$ is
the \emph{transition relation} and elements of $\Delta$ are called \emph{transitions},
$\mu$ assigns to each transition $(q,e,q')\in\Delta$ (also written as $q\xrightarrow[]{e}q'$)
a nonempty interval $\mu(e)_{qq'}$ of $\R_{\ge0}$ with left
endpoint and right endpoint being $a$ and $b$, respectively, where $a\in\Q_{\ge0}$,
$b\in\Q_{\ge0}\cup\{+\infty\}$\footnote{When
  $b=+\infty$, the possible intervals can only be of the form $[a,+\infty)$ or $(a,+\infty)$;
when $a=b$, the possible intervals can only be $[a,a]=\{a\}$.}, 
$a\le b$, $\Sig$ is an alphabet of \emph{labels}, and $\ell:E\to\Sig\cup\{\ep\}$ is the \emph{labeling
function}. Removing $\mu$, $\Acal$ degenerates to an LFSA.

When $\Acal$ enters state $q$, the next event to occur is some $e$ such that $(q,e,q') \in \Delta$ and it will occur with a delay $t \in \mu(e)_{qq'}$; if no such event exists, $q$ is a \emph{dead} state from which no further evolution is possible.
When event $e\in E$
occurs, the \emph{label} $\ell(e)$ of $e$ will be observed if $\ell(e)\ne\ep$, in this case 
$e$ is called \emph{observable}; 
while nothing will be observed if $\ell(e)=\ep$, in this case $e$ is called \emph{unobservable}.
A transition $(q,e,q')$ is called \emph{observable} (resp., \emph{unobservable})
if $e$ is observable (resp., unobservable).
We denote by $E_o$ and $E_{uo}$ the set of observable events and the set of unobservable events, respectively.
Labeling function $\ell$ is
extended to $E\times \R_{\ge0}$ as follows: $\ell((e,t))=(\ell(e),t)$ if $e\in E_o$, $\ell( (e,t))=\ep$
otherwise. Then $\ell$ is recursively extended to $E^*$ as
$\ell(e_1\dots e_n)=\ell(e_1)\dots\ell(e_n)$ 
and also to $(E\times\R_{\ge0})^*$ analogously.

A \emph{path} of $\Acal$ is defined by 
a sequence $q_0\xrightarrow[]{e_1}q_1
\xrightarrow[]{e_2}\cdots\xrightarrow[]{e_{n}}q_n$, where $n\in\Z_{\ge0}$, $(q_{i-1},e_i,q_i)\in\Delta$ for all $i\in
\llb 1,n\rrb$. Its \emph{length} is $n$.
When $n=0$, the path degenerates to a single state $q_0$.
A path is called a \emph{cycle} if its start state and terminal state coincide.
For two states $q$ and $q'$, $q'$ is called \emph{reachable from} $q$ if there is a path
from $q$ to $q'$ with positive length. A state $q$ is called \emph{reachable} if either $q\in Q_0$ or $q$ is reachable from some
initial state.
A \emph{run} of $\Acal$ is 
a sequence $q_0\xrightarrow[]{e_1/t_1}q_1
\xrightarrow[]{e_2/t_2}\cdots\xrightarrow[]{e_{n}/t_{n}}q_n=:\pi$, where $n\in\Z_{\ge0}$,
$(q_{i-1},e_i,q_i)\in\Delta$, $t_i\in\mu(e_i)_{q_{i-1}q_i}$ for all $i\in\llb 1,n \rrb$.
A prefix of run $\pi$ is a run $q_0\xrightarrow[]{e_1/t_1}q_1
\xrightarrow[]{e_2/t_2}\cdots\xrightarrow[]{e_{m}/t_{m}}q_m$, where $m\le n$.
We say the run $\pi$ is over path $q_0\xrightarrow[]{e_1}q_1
\xrightarrow[]{e_2}\cdots\xrightarrow[]{e_{n}}q_n$. Sometimes, we write a run or a path as
$q_0\to^* q_n$\footnote{Here the symbol $*$ is chosen as Kleene star.} for short if the intermediate states, events, and times are not needed to be written explicitly.
Denote $q_0=:\init(q_0\to^* q)$ and $q=:\last(q_0\to^* q)$.
The \emph{timed word} of run
$\pi$ is defined by 
\begin{align}\label{eqn5_Resilience_LRTA}
  \tau(\pi)=(e_1,t_1')(e_2,t_2')\dots(e_n,t_n'),
\end{align}
where $t_i'=\sum_{k=1}^{i}t_k$ for all 
$i\in\llb 1,n \rrb$. The \emph{length} of a path/run/timed word is the length of its event sequence.
The \emph{weight}/\emph{duration} $\WEIGHT_\pi$, also denoted $\WEIGHT(\pi)$, of run $\pi$ and the \emph{weight}/\emph{duration} $\WEIGHT_{\tau(\pi)}$, also denoted $\WEIGHT(\tau(\pi))$,
of timed word $\tau(\pi)$ are both defined by $t_n'$. 
A path or run is called \emph{unobservable} if $\ell(e_1\dots e_n)=\ep$, and called \emph{observable} otherwise.
A run is called \emph{instantaneous} if its weight is equal to $0$. For a \emph{dead} state $q\in Q$, 
add an unobservable transition $(q,u,q)$
with $\mu(u)_{qq}=[0,+\infty)$. This modification is reasonable
because whenever $\Acal$
transitions to such a state $q$, it will always stay there and no label will be generated, but time will
still elapse. 
The concatenation of two runs $\pi_1:=q_0\xrightarrow[]{e_1/t_1}q_1
\xrightarrow[]{e_2/t_2}\cdots\xrightarrow[]{e_{n}/t_{n}}q_n$ and $\pi_2:=
q_n\xrightarrow[]{e_{n+1}/t_{n+1}}\cdots
\xrightarrow[]{e_{n+m}/t_{n+m}}q_{n+m}$ is defined as
$\pi_1\pi_2:=q_0\xrightarrow[]{e_1/t_1}q_1
\xrightarrow[]{e_2/t_2}\cdots\xrightarrow[]{e_{n+m}/t_{n+m}}q_{n+m}$.

Consider a run $q\xrightarrow[]{s}q'=:\pi$, where $s\in(E\times \R_{\ge0})^*(E_o\times \R_{\ge0})(E_{uo}\times \{0\})^*$.
Rewrite $\pi$ as $q\xrightarrow[]{s_1}q_1 \xrightarrow[]{s_2} \cdots \xrightarrow[]{s_n}
q'$, where $s_1,\dots,s_{n-1}\in (E_{uo}\times \R_{\ge0})^*(E_o\times \R_{\ge0})$,
$s_n\in (E_{uo}\times \R_{\ge0})^*(E_o\times \R_{\ge0})(E_{uo}\times \{0\})^*$,
denote 
\begin{align}\label{eqn1_Resilience_LRTA}
  \ell(\nu(\pi)) := \ell(\tau( q\xrightarrow[]{s_1}q_1 )) 
  \ell(\tau( q_1\xrightarrow[]{s_2}q_2 )) \dots \ell(\tau( q_{n-1}\xrightarrow[]{s_n}q' )).
\end{align}

For a run $\pi$ starting from some initial state, $\ell(\tau(\pi))\in(\Sig\times\R_{\ge0})^*$ is called a
\emph{timed label sequence generated by $\Acal$}. In this case, we observe
$\ell(e_i)$ at time $t_i'$ if $e_i\in E_o$, observe nothing at time $t_i'$ if $e_i\in E_{uo}$, $i\in\llb 1,n \rrb$, where $t_i'$ are defined in \eqref{eqn5_Resilience_LRTA}.
More generally, a sequence $(\sigma_1,t_1)\dots(\sigma_n,t_n)=:\gamma$ in $(\Sig\times\R_{\ge0})^*$ is called a 
\emph{timed label sequence} if $t_1\le \cdots \le t_n$.
For two runs $\pi_1,\pi_2$ starting from $Q_0$, they are called \emph{observationally
equivalent} if $\ell(\tau(\pi_1)) = \ell(\tau(\pi_2))$, i.e., the observations to them are the same.
The \emph{weight} $\WEIGHT_{\gamma}$
of timed label sequence $\gamma$ is defined as $t_n$. Particularly, $\WEIGHT_{\ep}:=0$.
The \emph{length} of a timed label sequence is the length of its label sequence.
The \emph{timed language} $L(\Acal)$ generated by $\Acal$ is defined by the set of timed words of all 
runs of $\Acal$ starting from initial states; $\LM(\Acal)$ is the set of timed label sequences generated
by $\Acal$.
Particularly, denote $L_q(\Acal)$ as the set of timed words of all runs of $\Acal$
starting from $q$.

\subsection{State estimation} 

For $\Acal$, a subset $x\subset Q$ of states, and a sequence
$\gamma\in(\Sig\times \R_{\ge0})^+$, we define the \emph{current-state estimate} as 
\begin{equation}\label{CSE_RTautomata}
	\begin{split}
		\Mt(\Acal,\gamma|x):=\{q\in Q|&(\exists q_0\in x)(\exists n\in\Z_{+})(\exists m\in\N)\\
							&\left(\exists\text{ a run }\pi=q_0\xrightarrow[]{e_1/t_1}\cdots\xrightarrow[]{e_n/t_n}q_n
							\xrightarrow[]{e_{n+1}/0}\cdots\xrightarrow[]{e_{n+m}/0}q\right)\\
							&[(e_n\in E_o) \wedge (e_{n+1}\dots e_{n+m}\in (E_{uo})^*) \wedge
							\ell(\tau(\pi))=\gamma]\}.
	\end{split}
\end{equation}

Particularly for $\Acal$ and $x\subset Q$, we define the \emph{instantaneous-state estimate} as
\begin{equation}\label{ISE_RTautomata}
	\begin{split}
	  \Mt(\Acal,(\ep,0)|x):=x\cup
						  \{q\in Q|&(\exists q_0\in x)(\exists n\in\Z_{+})
						  (\exists\text{ a run }\pi=q_0\xrightarrow[]{e_1/0}\cdots\xrightarrow[]{e_n/0}q)\\
						  &[e_1\dots e_n\in  (E_{uo})^*]\}.
	\end{split}
\end{equation}

For all $\gamma\in(\Sig\times\R_{\ge0})^+$, $\Mt(\Acal,\gamma|Q_0)$ is also rewritten as $\Mt(\Acal,\gamma)$
for short.
Intuitively, for $\gamma=(\s_1,t_1)\dots(\s_n,t_n)\in(\Sig\times \R_{\ge0})^+$, $\Mt(\Acal,\gamma|x)$ 
denotes the set of states $\Acal$ can be in when $\gamma$ has just been generated by $\Acal$ since $\Acal$
started from some state of $x$. 
In order to fit the setting of current-state estimate, after the occurrence of the last observable event
$e_n$ (i.e., $e_n$ occurs at the current time), we only allow unobservable, instantaneous runs,
which is represented by $q_n\xrightarrow[]{e_{n+1}/0}\cdots\xrightarrow[]{e_{n+m}/0}q$ and $e_{n+1}\dots 
e_{n+m}\in (E_{uo})^*$. Particularly, $\Mt(\Acal,(\ep,0)|x)$ denotes the set of states $\Acal$ can be in at the instant 
when $\Acal$ just transitions to some state of $x$. Since there may exist instantaneous transitions, at the
instant, $\Acal$ may be in some state outside of $x$.

$\Mt(\Acal,(\ep,0)|Q')$ can be computed in $\PTIME$, while
$\Mt(\Acal,\gamma|Q')$ can be computed in $\NP$ if all $t_1,\dots,t_n$ belong to $\Q$
\cite[Theorem~3.1]{Zhang2024OpacityReal-TimeAutomata}.

\subsection{The notion of concurrent composition}

In this subsection we recall the tool --- concurrent composition \cite{Yan2024DiagnosisReal-TimeAutomata}. 
For two LFSAs, in their concurrent composition,
observable transitions are synchronized and
unobservable transitions interleave \cite{Zhang2020DetPNFA}. The concurrent composition of two LRTAs
\cite{Yan2024DiagnosisReal-TimeAutomata} is a natural and nontrivial generalization of the concurrent composition of two LFSAs.
The concurrent composition of two LFSAs can be computed in time polynomial in the sizes of the two LFSAs
\cite{Zhang2020DetPNFA}, but the concurrent-composition computation problem of two LRTAs is $\NP$-complete
\cite{Yan2024DiagnosisReal-TimeAutomata}. We will use a variant of the concurrent composition proposed
and computed in \cite{Yan2024DiagnosisReal-TimeAutomata}.

\begin{definition}\label{def:CCALRTA}
  For an LRTA $\mathcal{A}$, the concurrent composition of $\Acal$ and itself, called the 
  \emph{self-composition} of $\Acal$, denoted as $\CC(\Acal)$, is defined by 
  \begin{align}
	( Q', E_o', Q_{0}', \Delta ', \mu',\Sigma, \ell') ,
  \end{align}where
\begin{itemize}
\item $Q'= Q\times Q$;
\item $E'_{o}=\{(e_{1},e _{2})\in E_{o}\times E_{o}|\ell(e_{1})=
  \ell(e_{2})\}$;
\item $Q_{0}'=Q_{0}\times Q_{0}$ is the set of initial states;
\item $\Delta'\subset Q'\times E'\times Q'$ is the transition relation;
\item $\ell'(e_{1},e_{2})=\ell(e_{1})=\ell(e_{2})$ for all $(e_{1},e_{2})\in E_o'$.
\end{itemize}

For all states $( q_{1},q_{2}),( q_{3},q_{4}) \in Q'$ and events
$( e_{1},e_{2})\in E_o'$, $(( q_{1},q_{2}),( e_{1},e_{2}),( q_{3},q_{4}))\in\Delta '$ if and only if there are two runs:
\begin{subequations}\label{eqn_admissiblerun_RTautomata}
\begin{align}
  \pi _1 & :=q_1\to^* q_5 \xrightarrow[]{e_1/t_1} q_7 \to^* q_3, \label{eqn1_admissiblerun_RTautomata}\\
  \pi _2 & := q_2 \to^* q_6 \xrightarrow[]{e_2/t_2} q_8 \to^* q_4, \label{eqn2_admissiblerun_RTautomata}
\end{align}
\end{subequations}
where $q_5, q_6,q_7,q_8\in Q $, $t_1\in \mu(e_1)_{q_5q_7}$, $t_2\in \mu(e_2)_{q_6q_8}$,
$\WEIGHT_{\pi_1}= \WEIGHT_{\pi_2}$, 
all events except for $e_1$ and $e_2$ are unobservable,
$\WEIGHT_{q_7 \to^* q_3} = \WEIGHT_{q_8 \to^* q_4} = 0$.
Such $\pi_1$ and $\pi_2$ are called left and right \emph{admissible runs} of transition $((q_{1},q_{2}),(e_{1},e_{2}),
( q_{3},q_{4}))$. 
\end{definition}

An observable transition 
$({q}_{1},{q}_{2})\xrightarrow[]{({e}_{1},{e}_{2})}({q}_{3},{q}_{4})$ in $\CC(\mathcal{A})$
is interpreted as follows: 
at the beginning $\mathcal{A}$ is in state $q_1$ or $q_2$ and transition to state $q_3$ or $q_4$
after some common time delay when event $e_1$ or $e_2$ occurs. That is, in the two cases, the occurrences of the
two observable events $e_1$ and $e_2$ are synchronized. Therefore, after $e_1$ and $e_2$, we only consider 
instantaneous transitions. Positive weight for the observable transition
means that the transition from $q_1$ to $q_3$ can cost positive time. Zero weight for the observable transition means 
that the transition from $q_1$ to $q_3$ costs no time.

\begin{example}
\refstepcounter{couRTA}
  An LRTA $\mathcal{A}_{\thecouRTA\label{couRTA2}}$ is depicted in \autoref{fig1_det_RTautomata}. The reachable 
  part of self-composition $\CC(\Acal_{\ref{couRTA2}})$ is illustrated in \autoref{fig2_det_RTautomata}. 
  $(( q_{0},q_{0}),( e_{1},e_{2}), ( q_{3},q_{4}))$ is an observable transition,
  ${q}_{0}\xrightarrow[]{u/1}{q}_1\xrightarrow[]{e_1/2}q_3$ and ${q}_{0}\xrightarrow[]{u/0.9}{q}_2\xrightarrow[]
  {e_2/2.1}q_4$ are two of its admissible runs, where both of them produce the timed label sequence
  $(\sigma,3)$.

  \begin{figure}[!htbp] 
  \centering
	\begin{tikzpicture}
	[>=stealth',shorten >=1pt,thick,auto, node distance=2.5 cm, scale = 0.8, transform shape,
	->,>=stealth,inner sep=2pt, initial text = {}]

	\tikzstyle{emptynode}=[inner sep=0,outer sep=0]

	\node[initial, state, initial where = left] (q0) {$q_0$};
	\node[state] (q2) [below right of = q0] {$q_2$};
	\node[state] (q1) [above right of = q0] {$q_1$};
	\node[state] (q3) [right of = q1] {$q_3$};
	\node[state] (q4) [right of = q2] {$q_{4}$};

	\path [->]
	(q0) edge node [above, sloped] {$u/[1,2]$} (q1)
	(q0) edge node [above, sloped] {$u/[0.6,1]$} (q2)
	(q3) edge [loop right] node {$e_1/[0,1]$} (q3)
	(q4) edge [loop right] node {$e_2/[1,2]$} (q4)
	(q1) edge node [above, sloped] {$e_1/[1,2]$} (q3)
	(q2) edge node [above, sloped] {$e_2/[2.1,3]$} (q4)
	;

     \end{tikzpicture}
	 \caption{LRTA $\Acal_{\ref{couRTA2}}$, where $u$ is
	 unobservable, $e_1$ and $e_2$ are observable and $\ell(e_1)=\ell(e_2)=\sigma$.}
	 \label{fig1_det_RTautomata}
  \end{figure}
  \begin{figure}[!htbp] 
  \centering
  	 \begin{tikzpicture}
	[>=stealth',shorten >=1pt,thick,auto, node distance=3.5 cm, scale = 0.8, transform shape,
	->,>=stealth,inner sep=2pt, initial text = {}]

	\tikzstyle{emptynode}=[inner sep=0,outer sep=0]

	\node[initial, state, initial where = left] (00) {$q_0q_0$};
	\node[state] (34) [right of = 00] {$q_3q_4$};
	
	\path [->]
	(00) edge node [sloped, above] {$(e_1,e_2)/+$} (34)
	(34) edge [loop right] node {$(e_1,e_2)/+$} (34)
	;

    \end{tikzpicture}
	\caption{Reachable part of self-composition $\CC(\mathcal{A}_{\ref{couRTA2}})$, where LRTA $\Acal_{\ref{couRTA2}}$ is shown in 
	\autoref{fig1_det_RTautomata}.}
	\label{fig2_det_RTautomata}
  \end{figure}
\end{example}

\begin{theorem}[\cite{Yan2024DiagnosisReal-TimeAutomata}]\label{thm1_RTautomata_diagnosis}
  Consider an LRTA $\Acal$, its self-composition $\CC(\Acal)$ computation problem is $\NP$-complete.
\end{theorem}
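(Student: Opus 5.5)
The statement has two parts: membership in $\NP$ and $\NP$-hardness. I would phrase the underlying decision problem as follows: given $\Acal$ and a candidate quadruple $((q_1,q_2),(e_1,e_2),(q_3,q_4))$, decide whether it belongs to $\Delta'$. Computing $\CC(\Acal)$ amounts to answering polynomially many such queries. The key observation for membership is that, for a fixed path, the set of weights of runs over it is an interval. Its left endpoint is the sum of the left endpoints of the transition intervals, and it is open iff some transition used has an open left endpoint. Its right endpoint is the sum of the right endpoints, it is $+\infty$ iff some transition used has right endpoint $+\infty$, and its openness is determined in the same way. Hence this interval depends only on the \emph{multiset} of transitions used, that is, on the Parikh vector of the path.

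\textbf{Membership in $\NP$.} For each admissible run in Definition~\ref{def:CCALRTA} I split the path into three pieces: the unobservable prefix $q_1\to^*q_5$, the single observable transition $q_5\xrightarrow{e_1}q_7$, and the unobservable tail $q_7\to^*q_4$ (similarly on the right). The tail must admit weight $0$, which holds iff it is an unobservable path using only transitions whose interval contains $0$. This is a plain reachability condition, checkable in polynomial time.

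For the prefixes I would nondeterministically guess the following data:
\begin{itemize}
\item the intermediate states $q_5,q_6,q_7,q_8$;
\item the \emph{support} (set of transitions used) of each prefix;
\item which endpoint conditions are strict or infinite, determined by the support.
\end{itemize}
I would then write an integer linear program in variables $x_\delta,y_\delta\ge0$, the transition counts of the left and right prefixes. It contains three groups of constraints:
\begin{itemize}
\item flow-conservation constraints making $x$ (resp.\ $y$) the Parikh vector of a path from $q_1$ to $q_5$ (resp.\ $q_2$ to $q_6$);
\item $x_\delta\ge1$ exactly on the guessed support, with connectivity of the support checked separately, which by Euler's theorem makes the Parikh vector realizable by a path;
\item the requirement that the two resulting weight intervals, each shifted by the interval of the observable transition $e_1$ resp.\ $e_2$, intersect, encoded as two linear inequalities that are strict or non-strict according to the guess.
\end{itemize}
Since the weights are rationals, I would multiply through by a common denominator and turn strict inequalities into $\ge$ with an offset of $1$ after scaling. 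Integer linear programming is in $\NP$ (Papadimitriou's small-solution bound), so the whole check is in $\NP$. The main obstacle is exactly this step: getting the open/closed endpoint bookkeeping right and making sure the Parikh-vector encoding is exact, i.e.\ that connectivity of the support plus flow balance characterizes realizable paths.

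\textbf{$\NP$-hardness.} I would reduce from Subset Sum. Given positive integers $w_1,\dots,w_n$ and a target $T$, build $\Acal$ with two disjoint components whose start states $s$ and $s'$ are both initial. The first component is a chain $s=p_0,p_1,\dots,p_n$ in which each step $p_{i-1}\to p_i$ has two unobservable transitions with point intervals $\{w_i\}$ and $\{0\}$, followed by $p_n\xrightarrow{e}f$ with interval $\{0\}$. The second component is $s'\xrightarrow{u}r\xrightarrow{e}f'$ with $u$ unobservable, interval $\{T\}$ on $u$, and interval $\{0\}$ on $e$. Here $e$ is observable, and $f,f'$ have no outgoing non-loop transitions, so the zero-weight tails are trivial.

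Then $((s,s'),(e,e),(f,f'))\in\Delta'$ iff some run of the first component reaching $f$ has weight exactly $T$. The first component's runs have weights exactly the subset sums of $\{w_1,\dots,w_n\}$, so this holds iff the Subset Sum instance is positive. The construction is clearly polynomial, which completes the hardness part.
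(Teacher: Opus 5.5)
The paper does not prove this theorem; it is quoted from \cite{Yan2024DiagnosisReal-TimeAutomata}, so there is no internal argument to compare yours with. Judged on its own, your proof is correct and self-contained.

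\textbf{Membership in $\NP$.} The key fact holds: the set of weights of runs over a fixed path is an interval determined by its Parikh vector and support. Wrapping this in an integer program works as you describe:
\begin{itemize}
\item flow conservation plus Euler connectivity for the two unobservable prefixes;
\item two strict or non-strict linear inequalities expressing that the two shifted intervals intersect, with strictness and $+\infty$ cases fixed by the guessed support;
\item scaling by a common denominator to turn strict inequalities into non-strict ones.
\end{itemize}
The zero-weight tails reduce to reachability along transitions whose interval contains $0$, as you say.

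\textbf{Hardness.} The Subset Sum reduction also works: the only $e$-transition in the first component is $p_n\to f$, its run weights are exactly the subset sums, and the second component always has weight $T$.

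A few small repairs are needed.
\begin{enumerate}
\item In the Euler check, require the support together with the start state $q_1$ (resp.\ $q_2$) to be connected. When $q_1=q_5$, a balanced, connected, nonempty support that never touches $q_1$ would otherwise be accepted, yet it is not realizable by a path from $q_1$. When $q_1\neq q_5$, balance already forces $q_1$ into the support.
\item In your construction, $\Delta\subset Q\times E\times Q$ and $\mu$ assigns a single interval to each triple $(q,e,q')$. So the two parallel transitions $p_{i-1}\to p_i$ with intervals $\{w_i\}$ and $\{0\}$ must carry distinct unobservable events, or one of them must pass through an auxiliary state.
\item The reduction is polynomial only because interval endpoints are written in binary, since Subset Sum is only weakly $\NP$-hard. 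This is worth stating explicitly.
\item The left tail should read $q_7\to^*q_3$, not $q_7\to^*q_4$.
\end{enumerate}
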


\subsection{The notion of observer}\label{subsec:observer}

In this subsection we recall the notion of \emph{observer} $\Acal_{\obs}$ \cite{Zhang2024OpacityReal-TimeAutomata} that
concatenates current-state estimates along timed label sequences generated by $\Acal$.

Before defining $\observer$, we need to define a notion of \emph{pre-observer} $\preobserver$.

\begin{definition}[\cite{Zhang2024OpacityReal-TimeAutomata}]\label{def_preobserver_RTautomata}
For an LRTA $\Acal$, we define its \emph{pre-observer} as a deterministic automaton 
\begin{align}\label{eqn_preobserver_RTautomata}
	\Acal_{\obs}^{\pre}=(X,\Sig\times \R_{\ge0},x_0,\dt_{\obs}^{\pre}),
\end{align}
where $X= 2^Q$ is the state set, $\Sig\times \R_{\ge0}$ the alphabet, 
$x_0=\Mt(\Acal,(\ep,0))\in X$
the unique initial state, $\preobserverdt\subset X\times (\Sig\times \R_{\ge0})\times X$ the transition relation.
For all $x,x'\in
X$ and $(\s,t)\in \Sig\times \R_{\ge0}$, $(x,(\s,t),x')\in\preobserverdt$ if and only if $x'=\Mt(\Acal,
(\s,t)|x)$.
\end{definition}

In \autoref{def_preobserver_RTautomata}, after $\preobserverdt$ is recursively
extended to $\preobserverdt\subset X\times (\Sig\times \R_{\ge0})^*\times X$, one has for all $x\in X$ and 
$(\s_1,t_1)\dots(\s_n,t_n)=:\gamma\in (\Sig\times \R_{\ge0})^{+}$, $(x_0,\gamma,x)\in\preobserverdt$ 
if and only if $\Mt(\Acal,\tau(\gamma))=x$, 
i.e., $x$ is the set of states that $\Acal$ can be in when timed label sequence $\tau(\gamma)$ has just been
generated.

Note that the alphabet $\Sig\times \R_{\ge0}$ is not finite, so one cannot compute
the whole $\Acal_{\obs}^{\pre}$. Next, we define observer $\observer$ as a computable sub-automaton of
$\preobserver$.

\begin{definition}[\cite{Zhang2024OpacityReal-TimeAutomata}]\label{def_observer_RTautomata}
	For an LRTA $\Acal$, consider its pre-observer \eqref{eqn_preobserver_RTautomata}, 
	we define its \emph{observer} as a finite automaton  
\begin{align}\label{eqn_observer_RTautomata}
	\observer=(X,\Sig_{\obs},x_0,\dt_{\obs}),
\end{align}
where $\Sig_{\obs}$ (resp., $\observerdt$) is a finite subset of $\Sig\times\Q_{\ge0}$ (resp., 
$\preobserverdt$), such that if there exists a transition
from $x\in X$ to $x'\in X$ in $\preobserverdt$ then at least one such transition belongs to
$\observerdt$.
\end{definition}

Note that an LRTA $\Acal$ may have multiple observers, because $\Sig_{\obs}$ may not be unique;
however, $X$ and $x_0$ must be unique. 
In \autoref{def_observer_RTautomata}, after $\dt_{\obs}$ is recursively
extended to $\dt_{\obs}\subset X\times (\Sig_{\obs})^*\times X$, one has for all $x\in X$ and 
$(\s_1,t_1)\dots(\s_n,t_n)=:\gamma\in (\Sig_{\obs})^{+}$, $(x_0,\gamma,x)\in\dt_{\obs}$ if and only if
$\Mt(\Acal,\tau(\gamma))=x$.


\begin{example}\label{exam5_RTautomata}
\refstepcounter{couRTA}
  Consider the LRTA $\Acal_{\thecouRTA\label{couRTA3}}$ shown in \autoref{fig1_RTautomata}.
	One of its observers is shown in \autoref{fig2_RTautomata}. 

	\begin{figure}[!h]
        \centering
	\begin{tikzpicture}
	[>=stealth',shorten >=1pt,thick,auto,node distance=2.5 cm, scale = 0.7, transform shape,
	->,>=stealth,inner sep=2pt]

	\tikzstyle{emptynode}=[inner sep=0,outer sep=0]

	\node[initial, state, initial where = left] (q0) {$q_0$};
	\node[state] (q2) [right of = q0] {$q_2$};
	\node[state] (q1) [above of = q2] {$q_1$};
	\node[state] (q3) [right of = q1] {$q_3$};
	\node[state] (q4) [right of = q2] {$q_{4}$};

	\path [->]
	(q0) edge node [above, sloped] {$a/[1,3]$} (q1)
	(q0) edge node [above, sloped] {$a/[1,2]$} (q2)
	(q1) edge node [above, sloped] {$a/[1,1]$} (q3)
	(q2) edge node [above, sloped] {$a/[1,2]$} (q4)
	;

     \end{tikzpicture}
	 \caption{An LRTA $\Acal_{\ref{couRTA3}}$, $\ell(a)=a$.}
	 \label{fig1_RTautomata}
	\end{figure}

	\begin{figure}[!h]
        \centering
	\begin{tikzpicture}
	[>=stealth',shorten >=1pt,thick,auto,node distance=3.5 cm, scale = 0.7, transform shape,
	->,>=stealth,inner sep=2pt]

	\tikzstyle{emptynode}=[inner sep=0,outer sep=0]

	\node[initial, elliptic state, initial where = left] (q0) {$\{q_0\}$};
    \node[elliptic state] (q1) [right of = q0] {$\{q_1\}$};
	\node[elliptic state] (q12) [above of = q1] {$\{q_1,q_2\}$};
	\node[elliptic state] (q3) [right of = q1] {$\{q_3\}$};
    \node[elliptic state] (q4) [above of = q3] {$\{q_4\}$};
	\node[elliptic state] (q34) [left of = q12] {$\{q_3,q_4\}$};

	\path [->]
	(q0) edge node [above, sloped] {$(a,2)$} (q12)
	(q0) edge node [above, sloped] {$(a,3)$} (q1)
	(q1) edge node [above, sloped] {$(a,1)$} (q3)
	(q12) edge node [above, sloped] {$(a,2)$} (q4)
	(q12) edge node [above, sloped] {$(a,1)$} (q34)
	;

     \end{tikzpicture}
	 \caption{One observer $\Acal_{\ref{couRTA3}\obs}$ of LRTA $\Acal_{\ref{couRTA3}}$ in \autoref{fig1_RTautomata}.}
	 \label{fig2_RTautomata}
	\end{figure}

\end{example}

\section{Main results}
\label{sec:mainresult}

\subsection{Formulation of a definition of resilience}

In order to formulate resilience, we specify a subset $\Ef\subset E$ of faulty events.
Then denote $\Enf:=E\setminus \Ef$.
A \emph{faulty run} is a run with at least one event in $\Ef$. A \emph{non-faulty run}
is a run with all events in $\Enf$.
We denote $\Acalf$ (resp., $\Acalnf$) as the subautomaton of $\Acal$
obtained by removing all transitions of $\Acal$ with events in $\Enf$ (resp., $\Ef$).
Denote $\Eonf:=E_o\cap \Enf$, $\Euonf:=E_{uo}\cap \Enf$,
$\Eof:=E_o\cap \Ef$, $\Euof:=E_{uo}\cap \Ef$.

By \autoref{def_preobserver_RTautomata}, the pre-observer of $\Acalnf$ is denoted by
\begin{align}\label{def_pre-observer_RTautomata_nf}
  \Acal_{\nf\obs}^{\pre}=(X,\Sig\times \R_{\ge0},x_{0\nf},\dt_{\nf\obs}^{\pre}),
\end{align}
the pre-observer of $\Acalf$ is denoted by
\begin{align}\label{def_pre-observer_RTautomata_f}
  \Acal_{\fsf\obs}^{\pre}=(X,\Sig\times \R_{\ge0},x_{0\fsf},\dt_{\fsf\obs}^{\pre}).
\end{align}

By \autoref{def_observer_RTautomata}, an observer of $\Acalnf$ is denoted by 
\begin{align}\label{def_observer_RTautomata_nf}
  \Acal_{\nf\obs}=(X_{\nf},\Sig_{\obs},x_{0\nf},\dt_{\nf\obs}),
\end{align}
an observer of $\Acalf$ is denoted by
\begin{align}\label{def_observer_RTautomata_f}
  \Acal_{\fsf\obs}=(X_{\fsf},\Sig_{\obs},x_{0\fsf},\dt_{\fsf\obs}).
\end{align}

We extend the notion of recovery point from LFSAs given in 
\cite{Fabre2022ResilienceDES} to LRTAs given as follows.
\begin{definition}\label{def_RP_LRTA}
  A state pair $(q,q')$ is called a \emph{recovery point of type E} (equality) if $L_q(\Acal)=L_{q'}(\Acal)$,
  called a \emph{recovery point of type I} (inclusion) if $L_q(\Acal) \subset L_{q'}(\Acal)$.
  The set of recovery points of type E (resp., I) is denoted as $\Rcal_E$ (resp., $\Rcal_I$).
\end{definition}

When an LRTA $\Acal$ enters $q$ or $q'$ at the same time, where $q$ is reached via a faulty run, but $q'$ is reached via a non-faulty run,
from then on, one cannot be sure whether the future behavior is generated from $q$ because $L_q(\Scal)$ is equal to or contained in $L_{q'}(\Acal)$, in other words, one cannot be sure whether the future behavior is generated from a faulty run, resulting in that the LRTA $\Acal$ returns to normalcy even if some faulty event has occurred. 

Because for a (unlabeled) real-time automaton, its language complementation is computable \cite{Dima2002RealTimeAutomata},
and for two general time automata, their language intersection problem is decidable \cite{Alur1994TimedAutomaton},
we have the set $\Rcal_E$ and the set $\Rcal_I$ are both computable for an LRTA $\Acal$.
From now on we only mention $\Rcal$, which means $\Rcal_E$ or $\Rcal_I$ but not both.

\begin{definition}\label{def_Resilience_LRTA}
  Consider an LRTA $\Acal$, a set $\Ef$ of faulty events, a set $\Rcal$ of recovery points, and a nonnegative integer $n$.
  A faulty run $q_0^1\xrightarrow[]{s_1}q_1=:\pi_1$, where $q_0^1\in Q_0$, $s_1\in (E\times \R_{\ge0})^* (E_o\times \R_{\ge0}) (E_{uo}\times \{0\})^*$, is called \emph{$n$-resilient with respect to $\Rcal$} if 
  for every run $q_1\xrightarrow[]{\bar s_2}q_2$, where $\bar s_2\in (E\times \R_{\ge0})^* (E_o\times \R_{\ge0}) (E_{uo}\times \{0\})^*$, $|\bar s_2|\ge n$ implies 
  there is a prefix $q_1\xrightarrow[]{s_2}q^1=:\pi_2$ of $q_1\xrightarrow[]{\bar s_2}q_2$ and
  a run $q_0^2\xrightarrow[]{r} q^2=:\pi_3$, where $\pi_3$ is in $\Acalnf$,
  $q_0^2\in Q_0$,
  $s_2\in (E_{uo}\times\{0\})^*\cup (E\times \R_{\ge0})^* (E_o\times \R_{\ge0}) (E_{uo}\times \{0\})^*$,
  $r\in (\Enf\times \R_{\ge0})^* (\Eonf\times \R_{\ge0}) (\Euonf\times \{0\})^*$,
  such that
  $\WEIGHT(\pi_1\pi_2)=\WEIGHT(\pi_3)$, $\ell(\tau(\pi_1\pi_2)) = \ell(\tau(\pi_3))$,
  and $(q^1,q^2)\in\Rcal$. The LRTA $\Acal$ is called \emph{resilient with respect to 
  $\Ef$ and $\Rcal$} if for every faulty run $q_0^1\xrightarrow[]{s_1}q_1$, where $q_0^1\in Q_0$,
  $s_1\in (E\times \R_{\ge0})^* (E_o\times \R_{\ge0}) (E_{uo} \times \{0\})^*$,
  there exists a nonnegative integer $m$ such that $q_0^1\xrightarrow[]{s_1}q_1$ is
  $m$-resilient with respect to $\Rcal$.
\end{definition}

By \autoref{def_Resilience_LRTA}, if a faulty run is resilient, then after sufficiently 
many events occur, the future behavior will be equal to (or contained in) another future behavior
via a nonfaulty run such that the two runs generate the same timed labeled sequence.
If an LRTA $\Acal$ is resilient, then whenever a faulty event occurs, after sufficiently
many events occur, the future behavior will be equal to (or contained in) another future behavior
via a nonfaulty run such that the two runs generate the same timed labeled sequence
(observationally equivalent as in \autoref{fig:resilienceDES});
roughly speaking, whenever a faulty event occurs, after sufficiently many events occur, the behavior returns to normalcy, and the occurrence of the faulty event cannot be observed.

In \autoref{def_Resilience_LRTA}, all runs $\pi_1,\pi_2,\pi_3$ satisfy that after the 
last observable event, the runs are unobservable and instantaneous.
This condition implies that possible recovery points only appear at the instants of observations. 
Such a setting makes the explanation of resilience more intuitive and makes
the subsequent discussion and verification unnecessarily complicated. 

			\begin{figure}[!htbp]
			  \centering
			  \begin{tikzpicture}[>=stealth',shorten >=1pt,node distance=2.0 pt, scale = 1.0, transform shape,>=stealth,inner sep=2pt, every text node part/.style={align=center}]
				\scriptsize
				\draw[gray, thick] (0,0) circle (4pt);
				\filldraw[red, thick] (4,2) circle (4pt);
				\filldraw[red, thick] (8,2) circle (4pt);
				\draw[gray, thick] (8,-2) circle (4pt);

				\draw [very thick, domain=90:178] plot ({2+2*cos(\x)},{2*sin(\x)});
				
				\draw [->, very thick, red] (4,2) -- node [inner sep=1pt, minimum size=1pt, midway, fill=white] {$s_2$} (7.9,2);

				\node at (8,2.5) {$q^1$};

				\draw [->, very thick] (2,2) -- node [inner sep=1pt, minimum size=1pt, midway, fill=white, pos=0.1] {$s_1\ef$} (3.9,2);

				\draw [very thick, domain=184:270] plot ({2+2*cos(\x)},{2*sin(\x)});

				\draw [->, very thick]
				(1.9,-2) -- node [inner sep=1pt, minimum size=1pt, midway, fill=white, pos=0.4] {$r$} (7.9,-2)
				;

				\node at (7.8,-2.5) {$q^2$};

	\draw [-, dashed] (8,1.8) -- (8,-1.8);
	\draw [-, dashed] (8,-2.2) -- (8,-3.2);

	\node at (8,-3.8) {$(q^1,q^2)$ is a recovery point};
	\node at (14,2) {$L_{q^1}$};
	\node at (14,0) {$\subset$ (or $=$)};
	\node at (14,-2) {$L_{q^2}$};

	\draw [-,fill=green,green]
	(8.15,2) -- (13,1) -- (13,3) -- (8.15,2);

	\draw [-,fill=green,green]
	(8.15,-2) -- (13,-3) -- (13,-1) -- (8.15,-2);

	\draw [-] (4,3) -- (4,3.5);
	\draw [-] (10,3) -- (10,3.5);
	\draw [<->] (4,3.25) -- node [inner sep=1pt, minimum size=1pt, fill=Pantone7527] {at least $n$ steps} (10.05,3.25);

	\node at (4,0) {observationally equivalent};

	\end{tikzpicture}
	\caption{An intuitive description for resilience. Once a faulty event (e.g., $\ef$) occurs, after at least $n$ more
	  events occur, the system returns to normalcy (in state $q^1$), where the normalcy means that there exists a state
	  $q^2$ such that the system behavior starting from $q^1$ (that is, the generated language $L_{q^1}$) is contained in
	  (or equal to) the system behavior starting from $q^2$ (that is, language $L_{q^2}$), $q^2$ can be reached from a run
	with no faulty events, and the runs leading to $q^1$ and $q^2$ cannot be distinguished by observation.}
  \label{fig:resilienceDES}
\end{figure}

\subsection{A necessary and sufficient condition for resilience}

In order to obtain a concise necessary and sufficient condition for the definition of resilience, we characterize its negation.

\begin{lemma}\label{lem1_Resilience_LRTA}
  Consider an LRTA $\Acal$, a set $\Ef$ of faulty events, and a set $\Rcal$ of recovery points.
  The LRTA $\Acal$ is not resilient with respect to $\Ef$ and $\Rcal$, if and only if,
  there exists a faulty run $q_0^1\xrightarrow[]{s_1}q_1$, where $q_0^1\in Q_0$, $s_1\in (E\times \R_{\ge0})^* (E_o\times \R_{\ge0}) (E_{uo}\times \{0\})^*$ such that 
  $q_0^1\xrightarrow[]{s_1}q_1$
  is not $n$-resilient for any nonnegative integer $n$, if and only if, there exists a faulty
  run $q_0^1\xrightarrow[]{s_1}q_1=:\pi_1$, where $q_0^1\in Q_0$, $s_1\in (E\times \R_{\ge0})^* (E_o\times \R_{\ge0}) (E_{uo}\times \{0\})^*$,
  for every nonnegative
  integer $n$, there exists a run
  $q_1\xrightarrow[]{\bar s_2}q_2$, where $\bar s_2\in (E\times \R_{\ge0})^* (E_o\times \R_{\ge0}) (E_{uo}\times \{0\})^*$,
  $|\bar s_2|\ge n$, such that
  for every prefix $q_1\xrightarrow[]{s_2}q^1=:\pi_2$ of $q_1\xrightarrow[]{\bar s_2}q_2$ and
  every run $q_0^2\xrightarrow[]{r} q^2=:\pi_3$, where $\pi_3$ is in $\Acalnf$,
  $q_0^2\in Q_0$,
  $s_2\in (E_{uo}\times\{0\})^*\cup (E\times \R_{\ge0})^* (E_o\times \R_{\ge0}) (E_{uo}\times \{0\})^*$,
  $r\in (\Enf\times \R_{\ge0})^* (\Eonf\times \R_{\ge0}) (\Euonf\times \{0\})^*$,
  $\WEIGHT(\pi_1\pi_2)=\WEIGHT(\pi_3)$ and $\ell(\tau(\pi_1\pi_2)) = \ell(\tau(\pi_3))$
  imply $(q^1,q^2)\not\in\Rcal$. 
\end{lemma}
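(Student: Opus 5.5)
This lemma is pure logical negation of \autoref{def_Resilience_LRTA}, so the plan is to unfold the quantifiers carefully and push the negation through them one at a time. No automata-theoretic tools (self-composition, observer) are needed. The only care required is keeping the side conditions on the strings $s_1,\bar s_2,s_2,r$ attached to the correct quantifiers, so that each restricts its bound variable both before and after negation.

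\textbf{First equivalence.} By \autoref{def_Resilience_LRTA}, $\Acal$ is resilient iff for every faulty run $\pi_1=q_0^1\xrightarrow[]{s_1}q_1$ with $q_0^1\in Q_0$ and $s_1\in (E\times \R_{\ge0})^* (E_o\times \R_{\ge0}) (E_{uo}\times \{0\})^*$, there exists $m\in\N$ such that $\pi_1$ is $m$-resilient with respect to $\Rcal$. Negating the outer ``for every faulty run, exists $m$'' gives ``there exists such a faulty run $\pi_1$ such that for all $n\in\N$, $\pi_1$ is not $n$-resilient.'' This is the second statement.

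\textbf{Second equivalence.} Here I negate the definition of $n$-resilience for the fixed run $\pi_1$. Being $n$-resilient reads: for every run $q_1\xrightarrow[]{\bar s_2}q_2$ with $\bar s_2$ of the prescribed form, $|\bar s_2|\ge n$ implies $\exists$ a prefix $\pi_2$ (with $s_2$ of the prescribed form) $\exists\pi_3$ in $\Acalnf$ (with $q_0^2\in Q_0$ and $r$ of the prescribed form) such that $\WEIGHT(\pi_1\pi_2)=\WEIGHT(\pi_3)$, $\ell(\tau(\pi_1\pi_2))=\ell(\tau(\pi_3))$, and $(q^1,q^2)\in\Rcal$.

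The negation proceeds in three moves.
\begin{enumerate}
\item The universal over $\bar s_2$ together with the implication $|\bar s_2|\ge n\Rightarrow\cdots$ becomes an existential: there is a run $q_1\xrightarrow[]{\bar s_2}q_2$ of the prescribed form with $|\bar s_2|\ge n$ for which the inner existential fails.
\item The failed inner existential $\exists\pi_2\exists\pi_3[A\wedge B\wedge C]$ becomes $\forall\pi_2\forall\pi_3\neg(A\wedge B\wedge C)$, where $A$ is the weight equality, $B$ is the label equality, and $C$ is $(q^1,q^2)\in\Rcal$.
\item Rewrite $\neg(A\wedge B\wedge C)$ as $(A\wedge B)\Rightarrow (q^1,q^2)\notin\Rcal$.
\end{enumerate}
The membership constraints (prefix of $\bar s_2$, the forms of $s_2$ and $r$, $\pi_3$ in $\Acalnf$, $q_0^2\in Q_0$) stay as the ranges of the universal quantifiers, exactly as they appear in the lemma. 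Substituting this into ``for all $n$, not $n$-resilient'' yields the third statement verbatim.

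\textbf{Main obstacle.} The only point needing attention is the bounded-quantifier bookkeeping in the second equivalence. In the definition, the restrictions on $s_2$ and $r$ are conjoined inside the existential; after negation they must appear as hypotheses, namely as ranges of ``for every prefix'' and ``every run''. I will state explicitly that $\neg\exists x\in S\,\varphi\equiv\forall x\in S\,\neg\varphi$, which justifies the restricted universals. I will also note that the condition $|\bar s_2|\ge n$ migrates from the antecedent of an implication into the range of the existential over $\bar s_2$. Each step is a logical equivalence, so both directions hold simultaneously.
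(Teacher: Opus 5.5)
Your proposal is correct and takes essentially the paper's approach: the paper gives no written proof of \autoref{lem1_Resilience_LRTA} and treats it as the direct negation of \autoref{def_Resilience_LRTA}. You carry out exactly that negation quantifier by quantifier, correctly keeping the side conditions on $s_1,\bar s_2,s_2,r$, $q_0^2\in Q_0$, and ``$\pi_3$ in $\Acalnf$'' as ranges of the quantifiers, with $|\bar s_2|\ge n$ moving into the range of the existential.
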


By \autoref{lem1_Resilience_LRTA}, we get a further result for the negation of resilience.

\begin{theorem}\label{thm1_Resilience_LRTA}
  Consider an LRTA $\Acal$, a set $\Ef$ of faulty events, and a set $\Rcal$ of recovery points.
  The LRTA $\Acal$ is not resilient with respect to $\Ef$ and $\Rcal$, if and only if,
  there exists a run 
  \begin{align}\label{eqn2_Resilience_LRTA}
	q_0 \xrightarrow[]{s_1} q_1 \xrightarrow[]{s_2} \cdots \xrightarrow[]{s_{n}} q_n \xrightarrow[]{s_{n+1}} \cdots \xrightarrow[]{s_{n+m}} q_{n+m} \xrightarrow[]{s_{n+m+1}} 
	\cdots \xrightarrow[]{s_{n+m+p}} q_{n+m+p}, 
  \end{align}
  where $q_0\in Q_0$, $s_i\in (E_{uo}\times \R_{\ge0})^* (E_o\times \R_{\ge0}) (E_{uo}\times \{0\})^*$, $i\in \llb 1, n \rrb $,
  $q_0 \xrightarrow[]{s_1} q_1 \xrightarrow[]{s_2} \cdots \xrightarrow[]{s_{n}} q_n$ is faulty,
  $s_j\in (E_{uo}\times \R_{\ge0})^* (E_o\times \R_{\ge0}) $, $j\in \llb n+1, n+m+p \rrb$,
  $q_{n+m}=q_{n+m+p}$,
  $|s_{n+m+1}\dots s_{n+m+p}|>0$\footnote{This implies $p>0$.}, such that in the pre-observer $\Acal_{\nf\obs}^{\pre}$ of $\Acalnf$, the run
  \begin{align}\label{eqn3_Resilience_LRTA}
	x_{0\nf} \xrightarrow[]{\ell(\tau(q_0 \xrightarrow[]{s_1} q_1))} \cdots \xrightarrow[]{\ell(\tau( q_{n+m-1} \xrightarrow[]{s_{n+m}} q_{n+m} ))} x_{n+m} \cdots \xrightarrow[]{\ell(\tau( q_{n+m+p-1} \xrightarrow[]{s_{n+m+p}} q_{n+m+p} ))} x_{n+m+p} 
  \end{align}
  satisfies that $x_{n+m}=x_{n+m+p}$, for every $i\in \llb 0,m+p-1 \rrb$ and every prefix 
	\[
	  q_{n+i} \xrightarrow[]{s_{n+i+1}'} q_{n+i}'
	\]
	of the run $q_{n+i} \xrightarrow[]{s_{n+i+1}} q_{n+i+1}$ in \eqref{eqn2_Resilience_LRTA} 
  with $s_{n+i+1}' \in (E_{uo}\times\{0\})^*$, $(\{q_{n+i}'\} \times x_{n+i}) \cap \Rcal = \emptyset$.
\end{theorem}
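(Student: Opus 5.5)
The plan is to prove the two directions separately, using \autoref{lem1_Resilience_LRTA} as the starting point. The key observation linking the lemma to the pre-observer is this. Let $\pi_1\pi_2$ end exactly at an observation instant, or be extended from one by unobservable instantaneous events. Then the set of states $q^2$ reachable by some run $\pi_3$ in $\Acalnf$ from $Q_0$ with $\WEIGHT(\pi_1\pi_2)=\WEIGHT(\pi_3)$, $\ell(\tau(\pi_1\pi_2))=\ell(\tau(\pi_3))$, and $r$ of the prescribed form is exactly the current-state estimate $\Mt(\Acalnf,\ell(\tau(\pi_1\pi_2)))$. By \autoref{def_preobserver_RTautomata}, this estimate is the state of $\Acal_{\nf\obs}^{\pre}$ reached from $x_{0\nf}$ along $\ell(\tau(\pi_1\pi_2))$, split into one-label segments as in \eqref{eqn3_Resilience_LRTA}. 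So the condition ``for every $\pi_3$, $(q^1,q^2)\notin\Rcal$'' is equivalent to $(\{q^1\}\times x)\cap\Rcal=\emptyset$, where $x$ is the pre-observer state after the observations made so far. The prefixes $q_{n+i}\xrightarrow[]{s'_{n+i+1}}q'_{n+i}$ with $s'_{n+i+1}\in(E_{uo}\times\{0\})^*$ are precisely the prefixes of the long run that end at an observation instant. The case $s_2\in(E_{uo}\times\{0\})^*$ in \autoref{def_Resilience_LRTA} corresponds to $i=0$.

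For sufficiency, I will take the run \eqref{eqn2_Resilience_LRTA} and set $\pi_1:=q_0\to^*q_n$. For each $N$, I pump the cycle $q_{n+m}\to^*q_{n+m+p}$ $k$ times with $k$ large, which is possible since $|s_{n+m+1}\dots s_{n+m+p}|>0$. This yields a run $\bar s_2$ of length $\ge N$ of the right form, after appending an empty instantaneous tail. The pre-observer is deterministic and $x_{n+m}=x_{n+m+p}$, so along the pumped run the estimate states cycle through $x_{n+m},\dots,x_{n+m+p-1}$. Every prefix at an observation instant therefore has its state $q'$ paired with one of the sets $x_{n+i}$, $i\in\llb 0,m+p-1\rrb$, which is already checked disjoint from $\Rcal$. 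The pumped run is simply concatenated, so time sums match automatically. By \autoref{lem1_Resilience_LRTA}, $\Acal$ is not resilient.

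For necessity, \autoref{lem1_Resilience_LRTA} gives a faulty $\pi_1$ together with arbitrarily long continuations all of whose observation-instant prefixes avoid $\Rcal$ against the current estimate. I will decompose each continuation into segments $s_j\in(E_{uo}\times\R_{\ge0})^*(E_o\times\R_{\ge0})$. The pairs $(q_j,x_j)\in Q\times 2^Q$ form a finite set of size $|Q|2^{|Q|}$. If a continuation has more than $|Q|2^{|Q|}$ observable events, pigeonhole gives indices with $(q_{n+m},x_{n+m})=(q_{n+m+p},x_{n+m+p})$ and $p>0$, which yields the required cycle.

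The main obstacle is that a continuation may be long only because of unobservable events, with few observable ones. I will handle this by a second pigeonhole inside an unobservable stretch. If some segment contains more than $|Q|$ unobservable events, a state repeats, giving an unobservable cycle. Such a cycle has to be fitted into the segment form $s_j$, which requires an observable event at the end of each cycle segment. If that fails, I expect to need an additional argument, or to note that the theorem's segment form implicitly assumes every infinite behaviour produces observations, with the pumped cycle then taken over observable segments. I would first try to show that, since lengths are unbounded but the number of pairs is finite, König-style reasoning extracts an infinite run. Along that run either infinitely many observations occur, and pigeonhole on $(q_j,x_j)$ gives the cycle, or the run eventually becomes unobservable, which must then be treated as the delicate case. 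I will also check that the faulty prefix $\pi_1$ ends with $(E_{uo}\times\{0\})^*$, which matches the form of $s_n$.
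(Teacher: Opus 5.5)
Your proof follows the paper's route: you identify the recovery test with $(\{q^1\}\times x)\cap\Rcal=\emptyset$ for the current pre-observer state $x$, pump the cycle for sufficiency, and apply pigeonhole on $(q_j,x_j)$ for necessity. The sufficiency half is fine. The gap is the case you leave open, and neither K\"onig's lemma nor any other argument can close it, because in that case the ``only if'' direction is false.

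\autoref{def_Resilience_LRTA} counts all events in $|\bar s_2|$, including the trailing $(E_{uo}\times\{0\})^*$ block. Every segment $s_j$, $j>n$, in \eqref{eqn2_Resilience_LRTA} ends with an observable event. So a witness lengthened only by unobservable events is invisible to the condition; the $u/[0,+\infty)$ loops the paper attaches to dead states already produce such witnesses.

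Concretely, take $Q_0=\{v_0\}$ and transitions $v_0\xrightarrow[]{f/[0,0]}v_1\xrightarrow[]{a/[1,1]}v_2\xrightarrow[]{b/[0,0]}v_3$, with $\Ef=\{f\}$, $f$ unobservable, and $a,b$ observable. Add the loop $v_3\xrightarrow[]{u/[0,+\infty)}v_3$ at the dead state $v_3$. For $\pi_1=v_0\xrightarrow[]{(f,0)(a,1)}v_2$ and every $N$, the continuation $\bar s_2=(b,0)(u,0)^N$ has the required form and length $N+1$. No run of $\Acalnf$ from $v_0$ contains an observable event, so no admissible $\pi_3$ exists. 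Hence $\Acal$ is not resilient, for any $\Rcal$. Adding a nonfaulty branch $v_0\xrightarrow[]{a/[1,1]}v_4$ with $v_4$ dead changes nothing for $\Rcal_I$ or $\Rcal_E$, since $v_2$ can fire $b$ and $v_4$ cannot.

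Yet no run of the form \eqref{eqn2_Resilience_LRTA} exists. Every run has at most two observable events. A faulty prefix forces $n\ge 1$, so $s_1$ contains the first of them, and at most one segment can follow $q_n$. Then $m=0$, $p=1$ and $q_n=q_{n+1}$, i.e.\ $v_2=v_3$, which is false.

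The paper's own proof has the same hole. It asserts that, for large $t$, \autoref{lem1_Resilience_LRTA} yields a continuation $q_n\xrightarrow[]{s_{n+1}}\cdots\xrightarrow[]{s_{n+t}}q_{n+t}$ with $t$ observable segments. That does not follow from $|\bar s_2|\ge n$. What is needed is a change in the statement, not a cleverer argument.

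If the length requirement in \autoref{def_Resilience_LRTA} is read as $|\ell(\bar s_2)|\ge n$ (number of observations), then a single witness with more than $|Q|2^{|Q|}$ observations exists. Your first pigeonhole argument is then already a complete proof of necessity, with no K\"onig step.

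With the definition as written, the characterization needs a second alternative covering witnesses lengthened by unobservable cycles. Pumping such a cycle with positive delay shifts all later observation times, and hence the estimates. So this case does not reduce to a cycle in $\Acal_{\nf\obs}^{\pre}$.
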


\begin{proof}
  ``only if'': Assume $\Acal$ is not resilient with respect to $\Ef$ and $\Rcal$. By \autoref{lem1_Resilience_LRTA},
  there exists a faulty run
  \begin{align}\label{eqn6_Resilience_LRTA} 
	q_0 \xrightarrow[]{s_1} q_1 \xrightarrow[]{s_2} \cdots \xrightarrow[]{s_{n}} q_n =: \pi_1,
  \end{align}
  where $q_0 \in Q_0$, $s_i\in (E_{uo}\times \R_{\ge0})^* (E_o\times \R_{\ge0}) (E_{uo}\times \{0\})^*$, $1\le i\le n$; choose a sufficiently large positive integer $t$, there exists a run
  \begin{align}\label{eqn7_Resilience_LRTA}
	q_n \xrightarrow[]{s_{n+1}} \cdots \xrightarrow[]{s_{n+t}} q_{n+t} =: \pi_2,
  \end{align}
  where $s_{n+j}\in (E_{uo}\times \R_{\ge0})^* (E_o\times \R_{\ge0})$, $1\le j\le t$, then $|s_{n+1} \dots s_{n+t}| \ge t$; for every $k\in \llb 0,t-1 \rrb$, for every prefix 
	\begin{align}
	  q_{n+k} \xrightarrow[]{s_{n+k+1}'} q_{n+k}' =: \pi_3(k)
	\end{align}
	of $q_{n+k} \xrightarrow[]{s_{n+k+1}} q_{n+k+1}$ 
	with $s_{n+k+1}' \in (E_{uo}\times\{0\})^*$,
  for every run 
  \begin{align}
	q_0^2\xrightarrow[]{r} q^2 =: \pi_4,
  \end{align}
  in $\Acalnf$, where $q_0^2\in Q_0$,
  $r\in (\Enf\times \R_{\ge0})^* (\Eonf\times \R_{\ge0}) (\Euonf\times \{0\})^*$,
  we have 
  $\WEIGHT(\pi_1\pi_3(k))=\WEIGHT(\pi_4)$ and $\ell(\tau(\pi_1\pi_3(k))) = \ell(\tau(\pi_4))$
  imply $(q_{n+k}',q^2)\not\in\Rcal$. This implies that 
  in the pre-observer $\Acal_{\nf\obs}^{\pre}$ of $\Acalnf$, in the run
  \begin{align*}
	x_{0\nf} \xrightarrow[]{\ell(\nu(\pi_1))} x_n \xrightarrow[]{\ell(\nu(\pi_3(k)))} x_{n+k},
  \end{align*}
  $(\{q_{n+k}'\} \times x_{n+k}) \cap \Rcal = \emptyset$, $k\in \llb 0,t-1 \rrb$,
  where function $\ell\circ\nu$ is defined in \eqref{eqn1_Resilience_LRTA}. Because $t$ is
  sufficiently large, by the pigeonhole principle, there exists $0 \le m < m+p \le t$, a prefix
  \begin{align*}
	q_0 \xrightarrow[]{s_1\dots s_n} q_n  \xrightarrow[]{s_{n+1}\dots s_{n+m}} q_{n+m}  \xrightarrow[]{s_{n+m+1}\dots s_{n+m+p}} q_{n+m+p}
  \end{align*}
  of $\pi_1\pi_2$ such that $q_{n+m} = q_{n+m+p}$ and $x_{n+m} = x_{n+m+p}$, where $x_{n+m}$ and $x_{n+m+p}$ are states of $\Acal_{\nf\obs}^{\pre}$ in the run 
  \begin{align*}
	x_{0\nf} \xrightarrow[]{\ell(\nu( q_0 \xrightarrow[]{s_1\dots s_n} q_n  \xrightarrow[]{s_{n+1}\dots s_{n+m}} q_{n+m} ))} x_{n+m} \xrightarrow[]{\ell(\nu(  q_{n+m}  \xrightarrow[]{s_{n+m+1}\dots s_{n+m+p}} q_{n+m+p} ))} x_{n+m+p}.
  \end{align*}

  ``if'': Assume that the condition after ``if and only if'' of \autoref{thm1_Resilience_LRTA} holds. Choose a faulty run 
  \begin{align}\label{eqn9_Resilience_LRTA}
	q_0 \xrightarrow[]{s_1} q_1 \xrightarrow[]{s_2} \cdots \xrightarrow[]{s_{n}} q_n =: \pi_1
  \end{align}
  as in \eqref{eqn2_Resilience_LRTA}. Choose an arbitrary positive integer $t$ and construct a run
  \begin{align}\label{eqn10_Resilience_LRTA}
	q_n \xrightarrow[]{s_{n+1}} \cdots \xrightarrow[]{s_{n+m}} q_{n+m} \left( \xrightarrow[]{s_{n+m+1}} 
	\cdots \xrightarrow[]{s_{n+m+p}} q_{n+m+p}  \right) ^t 
  \end{align}
  from \eqref{eqn2_Resilience_LRTA}.
  Then $|s_{n+1} \dots s_{n+m}\left( s_{n+m+1} \dots s_{n+m+p} \right)^t| \ge t$. Rewrite 
  \eqref{eqn10_Resilience_LRTA} as
  \begin{subequations}\label{eqn11_Resilience_LRTA}
  \begin{align}
	&q_n \xrightarrow[]{s_{n+1}} \cdots \xrightarrow[]{s_{n+m}} q_{n+m} 
	\xrightarrow[]{s_{n+m+1}} \cdots \xrightarrow[]{s_{n+m+p}} q_{n+m+p}\\
	&\xrightarrow[]{s_{n+m+p+1}} \cdots \xrightarrow[]{s_{n+m+2p}} q_{n+m+2p}
	\cdots
	\xrightarrow[]{s_{n+m+(t-1)p+1}} \cdots \xrightarrow[]{s_{n+m+tp}} q_{n+m+tp},
	\end{align}
    \end{subequations}			
  where $s_{n+m+i}=s_{n+m+p+i}=\cdots= s_{n+m+(t-1)p+i}$,
  $q_{n+m+i}=q_{n+m+p+i}=\cdots= q_{n+m+(t-1)p+i}$, $i \in \llb 1,p \rrb$.

  Choose a prefix of \eqref{eqn11_Resilience_LRTA} as
  \[
	q_n \xrightarrow[]{s_{n+1}} \cdots \xrightarrow[]{s_{n+j-1}}q_{n+j-1}\xrightarrow[]{s'_{n+j}} q_{n+j-1}' =: \pi_2(j),
  \]
  where $1\le j \le m+tp$, $s_{n+j}'\in (E_{uo}\times \{0\})^*$,
  then for every run
  \begin{align}
	q_0^2\xrightarrow[]{r} q^2=:\pi_3,
  \end{align}
  in $\Acalnf$ with $q_0^2\in Q_0$ and
  $r\in (\Enf\times \R_{\ge0})^* (\Eonf\times \R_{\ge0}) (\Euonf\times \{0\})^*$,
  we have
  $\WEIGHT(\pi_1\pi_2(j))=\WEIGHT(\pi_3)$ and $\ell(\tau(\pi_1\pi_2(j))) = \ell(\tau(\pi_3))$
  imply $(q_{n+j}',q^2)\not\in\Rcal$ by definition of pre-observer.
  By \autoref{lem1_Resilience_LRTA}, $\Acal$ is not resilient with respect to $\Ef$ and $\Rcal$.
\end{proof}

\subsection{A verification algorithm for resilience}

\autoref{thm1_Resilience_LRTA} provides a necessary and sufficient condition for the negation of resilience, 
but the condition is not verifiable, because the pre-observer $\Acal_{\nf\obs}^{\pre}$ of 
$\Acal_{\nf}$ is not computable. Next, we use an observer $\Acal_{\nf\obs}$ of $\Acal_{\nf}$
and the concurrent composition $\CC(\Acal,\Acal_{\nf\obs})$ to give a verification algorithm
for the negation of resilience, where $\Acal_{\nf\obs}$ and $\CC(\Acal,\Acal_{\nf\obs})$
are computable and can be used to verify the negation of resilience.

The condition in \autoref{thm1_Resilience_LRTA} will be verifiable, if one can compute for every $i$, the pair of the run $q_i\xrightarrow[]{s_{i+1}} q_{i+1}$ (as in \eqref{eqn2_Resilience_LRTA}) in $\Acal$ and the run $x_i \xrightarrow[]{\ell(\tau(q_i\xrightarrow[]{s_{i+1}} q_{i+1}))} x_{i+1}$ (as in \eqref{eqn3_Resilience_LRTA}) in $\Acal_{\nf\obs}^{\pre}$.
However, it is not possible to compute all such pairs, because there are infinitely many such pairs.
Despite of this impossibility, one can check for all states $q_i,q_{i+1}$ of $\Acal$ and all state sets $x_i,x_{i+1}$ of $\Acal_{\nf\obs}^{\pre}$, if there are such runs from $q_i$ to $q_{i+1}$ and from $x_i$ to $x_{i+1}$ as above instead.
Formally, one needs to check the following. 
Choose two arbitrary states $q_1,q_2\in Q$ and two arbitrary state subsets $x_1,x_2\subset Q$.
Check whether \eqref{quoteA_Resilience_LRTA} is satisfied.
\begin{equation*}
  \tag{A}\label{quoteA_Resilience_LRTA}
  \parbox{\dimexpr\linewidth-4em}{%
	\strut
	There are runs
	\begin{align*}
	  & q_1 \xrightarrow[]{s_2} q_2\text{ in }\Acal,\\
	  & x_1 \xrightarrow[]{\ell(\tau(q_1\xrightarrow[]{s_2} q_2))} x_2\text{ in }\Acal_{\nf\obs}^{\pre},
	\end{align*}
	where $s_2\in (E_{uo}\times \R_{\ge0})^* (E_o\times \R_{\ge0}) (E_{uo}\times \{0\})^*$.
	\strut
  }
\end{equation*}

Equivalently, one needs to check for two arbitrary states $q_1,q_2\in Q$ and two arbitrary state subsets $x_1,x_2\subset Q$, whether \eqref{quoteB_Resilience_LRTA} is satisfied.
\begin{equation*}
  \tag{B}\label{quoteB_Resilience_LRTA}
  \parbox{\dimexpr\linewidth-4em}{%
	\strut
	There exists $\sigma\in\Sig$ and $t\in\Q_{\ge0}$ such that there is a run
	\begin{align*}
	  & q_1 \xrightarrow[]{s_2} q_2\text{ in }\Acal,
	\end{align*}
	where $s_2\in (E_{uo}\times \R_{\ge0})^* (E_o\times \R_{\ge0}) (E_{uo}\times \{0\})^*$, $\ell(\tau(q_1 \xrightarrow[]{s_2} q_2))=(\sigma,t)$;
	for every $q_1'\in x_1$ and every $q_2'\in x_2$, there exists a run
	\begin{align*}
	  & q_1' \xrightarrow[]{s_2'} q_2'\text{ in }\Acalnf,
	\end{align*}
	where $s_2'\in (E_{uo\nf}\times \R_{\ge0})^* (E_{o\nf}\times \R_{\ge0}) (E_{uo\nf}\times \{0\})^*$, $\ell(\tau(q_1' \xrightarrow[]{s_2'} q_2'))=(\sigma,t)$;
	for every $q_1'\in x_1$ and every $q_2'\in Q\setminus x_2$, there exists no run
	\begin{align*}
	  & q_1' \xrightarrow[]{s_2'} q_2'\text{ in }\Acalnf,
	\end{align*}
	such that $s_2'\in (E_{uo\nf}\times \R_{\ge0})^* (E_{o\nf}\times \R_{\ge0}) (E_{uo\nf}\times \{0\})^*$, $\ell(\tau(q_1' \xrightarrow[]{s_2'} q_2'))=(\sigma,t)$.
	\strut
  }
\end{equation*}

  \begin{lemma}\label{lem2_Resilience_LRTA}
  Choose two arbitrary states $q_1,q_2\in Q$ and two arbitrary state subsets $x_1,x_2\subset Q$. \eqref{quoteA_Resilience_LRTA} is satisfied if and only if \eqref{quoteB_Resilience_LRTA} is satisfied.
\end{lemma}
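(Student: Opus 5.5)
The plan is to unfold the definition of the pre-observer $\Acal_{\nf\obs}^{\pre}$ (\autoref{def_preobserver_RTautomata}) together with the current-state estimate \eqref{CSE_RTautomata}, applied to the one-letter timed label sequence produced by $s_2$. Two facts drive the argument. First, since $s_2\in (E_{uo}\times \R_{\ge0})^* (E_o\times \R_{\ge0}) (E_{uo}\times \{0\})^*$ contains exactly one observable event, $\ell(\tau(q_1\xrightarrow[]{s_2} q_2))$ is a single pair $(\sigma,t)$ with $t=\WEIGHT(q_1\xrightarrow[]{s_2} q_2)$. Second, a transition $x_1\xrightarrow[]{(\sigma,t)}x_2$ exists in $\Acal_{\nf\obs}^{\pre}$ if and only if $x_2=\Mt(\Acal_\nf,(\sigma,t)|x_1)$. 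By \eqref{CSE_RTautomata}, this set is exactly the set of $q_2'$ reachable from $x_1$ by a run of $\Acal_\nf$ whose event sequence has the shape of $s_2'$ and whose label is $(\sigma,t)$; the trailing unobservable instantaneous events correspond to the $(E_{uo\nf}\times\{0\})^*$ suffix.

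The reading of \eqref{quoteB_Resilience_LRTA} needs care. Its second and third clauses express the equality $x_2=\Mt(\Acal_\nf,(\sigma,t)|x_1)$ only if the quantifier over $q_1'\in x_1$ is read as ``some $q_1'\in x_1$'' in the second clause. The literal ``every $q_1'\in x_1$'' would be strictly stronger than membership in the estimate. I will therefore prove the lemma with that clause read existentially, which is the reading consistent with the definition of the pre-observer, and I will point this out explicitly.

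For ``(B) $\Rightarrow$ (A)'': take the run $q_1\xrightarrow[]{s_2}q_2$ supplied by (B). Its label is $(\sigma,t)$. The second and third clauses give $x_2=\Mt(\Acal_\nf,(\sigma,t)|x_1)$, so $x_1\xrightarrow[]{(\sigma,t)}x_2$ is a transition of $\Acal_{\nf\obs}^{\pre}$.

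For ``(A) $\Rightarrow$ (B)'': the same unfolding yields some $\sigma$ and some $t\in\R_{\ge0}$ satisfying all three clauses. The remaining task is to ensure $t\in\Q_{\ge0}$, and I expect this to be the main obstacle.

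The idea for the rationality step is as follows. For fixed states $p,p'$ and a fixed $\sigma$, let $T_{p,p'}\subset\R_{\ge0}$ be the set of durations $t$ for which a run $p\xrightarrow[]{s}p'$ of the required shape with label $(\sigma,t)$ exists, taken either in $\Acal$ or in $\Acalnf$.

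1. I will show that each $T_{p,p'}$ is a finite union of intervals with endpoints in $\Q_{\ge0}\cup\{+\infty\}$. The duration of such a run is the sum of the delays before and at the observable event. Sums of intervals with rational endpoints are again such intervals. Cycles of unobservable transitions contribute sets of the form $k\cdot I$ for $k\ge1$. Because the states and transitions are finite, a pumping argument should show that only finitely many iteration counts matter up to a tail of the form $[c,+\infty)$ or $(c,+\infty)$ with $c$ rational. This is the analogue of the argument behind computing $\Mt$ in \cite[Theorem~3.1]{Zhang2024OpacityReal-TimeAutomata}, and I would invoke that result, or the regularity of real-time automaton languages \cite{Dima2002RealTimeAutomata}, rather than redo it.
2. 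The set of admissible $t$ for (B) is then a finite Boolean combination of such sets: intersect $T_{q_1,q_2}$ computed in $\Acal$, intersect $\bigcup_{q_1'\in x_1}T_{q_1',q_2'}$ computed in $\Acalnf$ over all $q_2'\in x_2$, and remove $\bigcup_{q_1'\in x_1}T_{q_1',q_2'}$ computed in $\Acalnf$ over all $q_2'\notin x_2$.
3. Such a Boolean combination is a finite union of intervals, possibly degenerate, with rational endpoints. Every nonempty interval with rational endpoints contains a rational point: a singleton $\{a\}$ has $a\in\Q$, and a nondegenerate interval contains rationals.

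Since (A) makes this set nonempty, it contains a rational $t$, which gives (B).
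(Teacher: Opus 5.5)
Your route is the paper's. You unfold \autoref{def_preobserver_RTautomata} and \eqref{CSE_RTautomata} for the one-letter sequence $(\sigma,t)$, observe that \eqref{quoteB_Resilience_LRTA}$\Rightarrow$\eqref{quoteA_Resilience_LRTA} is immediate, and replace the real $t$ given by \eqref{quoteA_Resilience_LRTA} by a rational one. The paper does this last step with the single remark that $\Q_{\ge0}$ is dense in $\R_{\ge0}$.

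Two of your observations go beyond the paper and are correct:
\begin{itemize}
\item The second clause of \eqref{quoteB_Resilience_LRTA} must be read as ``for every $q_2'\in x_2$ there is some $q_1'\in x_1$''. State this quantifier order explicitly; your step~2 already uses it. This reading is needed in both directions: the literal clause also holds vacuously for $x_1=\emptyset$ and arbitrary $x_2$, while \eqref{quoteA_Resilience_LRTA} forces $x_2=\emptyset$.
\item Density alone does not produce a rational witness. The admissible set of $t$ can be a single point, and what saves you is that the endpoints are rational.
\end{itemize}

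The gap is your step~1. The set $T_{p,p'}$ need not be a finite union of intervals. Your pumping claim, that iterations of cycles eventually merge into a tail $[c,+\infty)$ or $(c,+\infty)$, fails whenever the accumulated delay interval of a cycle is a single point. For example, an unobservable self-loop $u/[1,1]$ at $p$ followed by an observable $a/[0,0]$ from $p$ to $p'$ gives $T_{p,p'}=\N$. In general such sets have ultimately periodic tails. Neither \cite[Theorem~3.1]{Zhang2024OpacityReal-TimeAutomata} (which concerns computing $\Mt$) nor \cite{Dima2002RealTimeAutomata} gives you a finite union of intervals. So steps~2 and~3, as written, rest on a false premise.

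What you actually need, and what is true, is local finiteness: for every $M$, $T_{p,p'}\cap[0,M]$ is a finite union of intervals with endpoints in $\Q_{\ge0}$. Here is one way to see it.
\begin{itemize}
\item The durations of runs over a fixed path form an interval: the Minkowski sum of the intervals of its transitions up to the observable one, with the trailing instantaneous part contributing $\{0\}$.
\item If the left endpoint of this interval is at most $M$, the path uses at most $M/a_{\min}$ transitions with positive left endpoint, where $a_{\min}$ is the least positive left endpoint. Hence only finitely many pairs (left endpoint, left-closedness) occur.
\item For each such pair, the union of the path intervals sharing it is a single interval. Its right endpoint is the supremum of a subset of $\frac1D\N\cup\{+\infty\}$, with $D$ a common denominator of all finite endpoints. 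That supremum is either attained or $+\infty$, so it is rational or $+\infty$.
\end{itemize}

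Now carry out your steps~2 and~3 inside $[0,t_0+1]$, where $t_0$ is the real witness from \eqref{quoteA_Resilience_LRTA}. There the admissible set is a finite Boolean combination of finitely many intervals with rational endpoints, and it contains $t_0$. It is therefore a nonempty finite union of such intervals, possibly degenerate, and so contains a rational $t$.
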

\begin{proof}
  By definition, if \eqref{quoteA_Resilience_LRTA} is satisfied, then \eqref{quoteB_Resilience_LRTA} is satisfied except that $t\in \R_{\ge0}$. Because $\Q_{\ge0}$ is dense in $\R_{\ge0}$, $t$ could be chosen in $\Q_{\ge0}$. See \autoref{fig4_Resilience_LRTA} as an illustration.
  It is easy to see that if \eqref{quoteB_Resilience_LRTA} is satisfied, then \eqref{quoteA_Resilience_LRTA} is also satisfied.

  	\begin{figure}[!htbp]
	  \centering
			\begin{tikzpicture}
			  [thick, scale = 1.8, align = center]
		
                \draw (5,0) ellipse (0.25 and 1.25);
                \draw (0,1.25) arc (90:270:0.25 and 1.25);
                \draw [dashed] (0,1.25) arc (90:-90:0.25 and 1.25);
                \node at (-.35,0.9) {$Q$};
                \draw [dashed] (0,1.25) -- (5,1.25);
                \draw [dashed] (0,-1.25) -- (5,-1.25);
                
                \draw [dashed, blue] (0,0) ellipse (0.15 and 0.90);
                \node at (0.0,1.0) {$\blue x_1$};
				\node at (2.5, -1.5) {$t$ depends on $x_1,\sigma,x_2$};

                \draw [blue] (5,0) ellipse (0.15 and 0.90);
                \node at (5.0,1.0) {$\blue x_2$};
				\draw [dashed, ->, blue] (0.1,1.0) -- (4.87,1.0) node [midway, fill=white] {$(\sigma,t)$};

				\node at (2.5, 1.5) {$\sigma\in\Sigma$, $E_{\sigma o\nf}=\{e\in E_{o\nf}|\ell(e)=\sigma\}$};
                \draw [dashed, ->] (0,0.2) -- (5,0.0) node [sloped, midway, fill=white] {$(E_{uo\nf}\times \R_{\ge0})^* (E_{\sigma o\nf}\times \R_{\ge0}) (E_{uo\nf}\times \{0\})^*$};

				\draw [dashed, ->] (0.0,-0.2) -- (5.0,-1.0) node [sloped, midway, fill=white] {$\xcancel{(E_{uo\nf}\times \R_{\ge0})^* (E_{\sigma o\nf}\times \R_{\ge0}) (E_{uo\nf}\times \{0\})^*}$};

                \end{tikzpicture}
				\caption{An illustration of the proof of \autoref{lem2_Resilience_LRTA}.}
		\label{fig4_Resilience_LRTA} 
	\end{figure}

\end{proof}

One also needs to check the following slight variant of \eqref{quoteB_Resilience_LRTA}.
Consider two arbitrary states $q_1,q_2\in Q$ and two arbitrary state subsets $x_1,x_2\subset Q$.
\begin{equation*}
  \tag{C}\label{quoteC_Resilience_LRTA}
  \parbox{\dimexpr\linewidth-4em}{%
	\strut
	There exists $\sigma\in\Sig$ and $t\in\Q_{\ge0}$ such that there is a run
	\begin{align*}
	  & q_1 \xrightarrow[]{s_2} q_2 \text{ in }\Acal,
	\end{align*}
	where $s_2\in (E_{uo}\times \R_{\ge0})^* (E_o\times \R_{\ge0})$, $\ell(\tau(q_1 \xrightarrow[]{s_2} q_2))=(\sigma,t)$;
	for every $q_1'\in x_1$ and every $q_2'\in x_2$, there exists a run
	\begin{align*}
	  & q_1' \xrightarrow[]{s_2'} q_2'\text{ in }\Acalnf,
	\end{align*}
	where $s_2'\in (E_{uo\nf}\times \R_{\ge0})^* (E_{o\nf}\times \R_{\ge0}) (E_{uo\nf}\times \{0\})^*$, $\ell(\tau(q_1' \xrightarrow[]{s_2'} q_2'))=(\sigma,t)$;
	for every $q_1'\in x_1$ and every $q_2'\in Q\setminus x_2$, there exists no run
	\begin{align*}
	  & q_1' \xrightarrow[]{s_2'} q_2'\text{ in }\Acalnf,
	\end{align*}
	such that $s_2'\in (E_{uo\nf}\times \R_{\ge0})^* (E_{o\nf}\times \R_{\ge0}) (E_{uo\nf}\times \{0\})^*$, $\ell(\tau(q_1' \xrightarrow[]{s_2'} q_2'))=(\sigma,t)$; for every state $q$ in the longest prefix
	$q_2 \xrightarrow[]{\bar s_2} \bar q_2$ of $q_1 \xrightarrow[]{s_2} q_2$ with $\bar s_2 \in (E_{uo}\times \{0\})^*$,
	$(\{q\}\times x_1) \cap \Rcal = \emptyset$.
	\strut
  }
\end{equation*}

According to a slight variant of the procedure given in \cite{Zhang2024OpacityReal-TimeAutomata} for computing observers of $\Acal$, the following result holds.
\begin{proposition}\label{prop1_Resilience_LRTA}
  The satisfiability of \eqref{quoteB_Resilience_LRTA} and \eqref{quoteC_Resilience_LRTA} can be verified in $\mathsf{N}$2$\EXPTIME$ in the size of $\Acal$.
\end{proposition}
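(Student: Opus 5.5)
The plan is to reduce the checking of \eqref{quoteB_Resilience_LRTA} and \eqref{quoteC_Resilience_LRTA} to the same kind of computation used in \cite{Zhang2024OpacityReal-TimeAutomata} to build the observer transitions, except that the left component runs in $\Acal$ and the right component runs in $\Acalnf$. The quantifiers ``$\exists\sigma,t$'', ``for all $q_1'\in x_1$, $q_2'\in x_2$ there exists a run'' and ``for no $q_2'\notin x_2$'' are handled by describing each run set by a finite union of intervals of admissible timestamps. Concretely, fix $q_1,q_2,x_1,x_2$ and $\sigma$. For each pair of states $(p,p')$, I will compute the set $T_{\Acal}(p,p',\sigma)\subset\R_{\ge0}$ of durations $t$ such that there is a run $p\xrightarrow{s}p'$ in $\Acal$ with $s\in(E_{uo}\times\R_{\ge0})^*(E_o\times\R_{\ge0})(E_{uo}\times\{0\})^*$ and $\ell(\tau(p\xrightarrow{s}p'))=(\sigma,t)$. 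I will compute the analogous set $T_{\nf}(p,p',\sigma)$ for runs of $\Acalnf$ with events in $\Enf$.

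The first step is to show that each such set is a finite union of intervals with rational endpoints, computable in $\EXPTIME$. Each run decomposes as an unobservable prefix, one observable $\sigma$-labelled transition, and an instantaneous unobservable suffix. The set of durations of unobservable runs between two states is the set of weights of paths in an automaton whose transitions carry intervals. Using the same cycle and shortest-path analysis as in \cite{Zhang2024OpacityReal-TimeAutomata}, this set is an ultimately periodic union of intervals, described with exponentially many endpoints. Minkowski sums with the interval of the observable transition preserve this form. The instantaneous suffix is a pure reachability question over transitions whose interval contains $0$, so it is decided in $\PTIME$.

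For \eqref{quoteC_Resilience_LRTA}, the suffix is empty. The extra condition on the states along the instantaneous unobservable prefix starting at $q_2$ can be checked by a reachability computation over zero-delay unobservable transitions, intersected with the recovery set $\Rcal$. $\Rcal$ is computable as already noted, and I will treat it as given.

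The second step is the quantifier elimination over $t$. Condition \eqref{quoteB_Resilience_LRTA} holds iff the set
\[
T_{\Acal}(q_1,q_2,\sigma)\cap\bigcap_{q_1'\in x_1,q_2'\in x_2}T_{\nf}(q_1',q_2',\sigma)\setminus\bigcup_{q_1'\in x_1,q_2'\notin x_2}T_{\nf}(q_1',q_2',\sigma)
\]
is nonempty. Boolean combinations of finite unions of rational intervals are again such unions. A nonempty union of this kind contains a rational point, which justifies taking $t\in\Q_{\ge0}$, as in \autoref{lem2_Resilience_LRTA}. Nonemptiness is then decided by sorting endpoints. Because of periodicity, I only need to scan up to an exponential bound (the least common period plus threshold) before the pattern repeats.

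The final step is the complexity count. There are $2^{|Q|}\cdot 2^{|Q|}\cdot|Q|^2\cdot|\Sig|$ choices of tuples, and each interval set may have doubly exponentially many pieces once intersected across exponentially many pairs. Guessing $t$ nondeterministically with a certificate of doubly exponential size and then verifying it gives $\mathsf{N}2\EXPTIME$, matching the observer bound. I expect the main obstacle to be the first step: bounding precisely the size and periodicity of the duration sets of unobservable runs, including open and closed endpoints and unbounded intervals. Here I would rely directly on the corresponding lemmas in \cite{Zhang2024OpacityReal-TimeAutomata} rather than reprove them.
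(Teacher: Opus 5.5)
Your framework matches what the paper intends. The paper's only justification is that a slight variant of the observer computation of \cite{Zhang2024OpacityReal-TimeAutomata} works. Your duration sets (ultimately periodic unions of intervals with rational endpoints, closed under Minkowski sums and Boolean combinations, with a rational witness $t$) spell that variant out adequately for \eqref{quoteB_Resilience_LRTA}. The gap is in your handling of the last clause of \eqref{quoteC_Resilience_LRTA}.

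\textbf{Where the prefix starts.} The prefix in that clause starts at $q_1$, not $q_2$. It is a prefix of the run $q_1\xrightarrow[]{s_2}q_2$; the $q_2$ in the displayed formula is a typo, and in \eqref{quoteC_Resilience_LRTA} the run ends with its observable event, so nothing follows $q_2$. Its states are paired with $x_1$, and the clause encodes the condition $(\{q_{n+i}'\}\times x_{n+i})\cap\Rcal=\emptyset$ of \autoref{thm1_Resilience_LRTA}. Compare the checks of $(q_4,q_6)$ and $(q_7,q_6)$ for the transition $(q_4,\{q_6\})\to(q_7,\emptyset)$ in \autoref{exam1_Resilience_LRTA}.

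\textbf{The clause is tied to the witnessing run.} More importantly, the clause constrains the \emph{same} run that witnesses $\ell(\tau(q_1\xrightarrow[]{s_2}q_2))=(\sigma,t)$. Which states lie on its maximal prefix in $(E_{uo}\times\{0\})^*$ depends on whether each unobservable delay is chosen to be $0$, and that choice also changes $t$. So the clause cannot be settled by a separate zero-delay reachability computation. Call $q$ bad if $(\{q\}\times x_1)\cap\Rcal\neq\emptyset$, and take $q_1$ not bad.
\begin{itemize}
\item A universal check (all states reachable from $q_1$ by zero-delay unobservable transitions are not bad) is too strong. Take $q_1\xrightarrow[]{u/[0,0]}b$ with $b$ bad, and $q_1\xrightarrow[]{a/[1,1]}q_2$. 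The run $q_1\xrightarrow[]{a/1}q_2$ satisfies the clause, yet the universal check rejects.
\item An existential check decoupled from $t$ is too weak. Suppose the only route is $q_1\xrightarrow[]{u/[0,1]}b\xrightarrow[]{a/[1,1]}q_2$ with $b$ bad. The admissible durations are then $(1,2]$, not $T_{\Acal}(q_1,q_2,a)=[1,2]$. Whenever the $\Acalnf$-conditions in \eqref{quoteC_Resilience_LRTA} hold only for $t=1$, \eqref{quoteC_Resilience_LRTA} fails, but your test accepts.
\end{itemize}

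\textbf{The repair fits inside your framework.} Put $G=\{q\in Q:(\{q\}\times x_1)\cap\Rcal=\emptyset\}$ and take the product of $\Acal$ with a phase bit:
\begin{itemize}
\item In phase $0$ only states of $G$ are allowed.
\item An unobservable transition with its interval cut down to $\mu\cap\{0\}$ stays in phase $0$.
\item An unobservable transition with interval $\mu\cap(0,+\infty)$, or the observable transition, moves to phase $1$, where nothing is restricted.
\end{itemize}
This is again a real-time automaton with rational interval constraints. So the set $T^{G}_{\Acal}(q_1,q_2,\sigma)$ of durations of runs from $(q_1,0)$ to $(q_2,1)$ of the required shape has the same ultimately periodic form. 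Using it in place of $T_{\Acal}(q_1,q_2,\sigma)$ in your Boolean combination gives a correct test for \eqref{quoteC_Resilience_LRTA} within the same $\mathsf{N}2\EXPTIME$ bound.
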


Now we are ready to give a tool to verify the negation of resilience.

\begin{definition}\label{def_CC-RTA-RTAObs} 
  Consider an LRTA $\Acal$ and a set $\Ef$ of faulty events. Define a concurrent composition\footnote{This concurrent composition can be regarded as a natural and nontrivial
  extension of the concurrent composition of an LFSA and its observer as in 
\cite[eqn.~(16)]{Zhang2023UnifiedFrame4DES}.}
  \begin{align}\label{CC-RTA-RTAObs}
	\CC(\Acal,\Acal_{\nf\obs}) = (Q\times 2^Q, \dt,Q_0\times\{x_{0\nf}\}),
  \end{align}
  where $Q\times 2^Q$ is the set of \emph{states}, $\dt\subset (Q\times 2^Q)\times (Q\times 2^Q)$ is the \emph{transition relation}, $Q_0\times\{x_{0\nf}\}$ is the set of \emph{initial states}.
  
  For every two states $(q_1,x_1),(q_2,x_2)\in Q\times 2^Q$, $((q_1,x_1),(q_2,x_2)) \in \dt$ if and only if $q_1,q_2,x_1,x_2$ satisfy \eqref{quoteB_Resilience_LRTA}. Define $\dt_{\fsf}\subset \dt$ as follows: for every two states $(q_1,x_1),(q_2,x_2)\in Q\times 2^Q$, $((q_1,x_1),(q_2,x_2)) \in \dt_{\fsf}$ if and only if $q_1,q_2,x_1,x_2$ satisfy \eqref{quoteB_Resilience_LRTA} and there is a faulty event in $s_2$ as in \eqref{quoteB_Resilience_LRTA}. Transitions of $\dt_{\fsf}$ are called \emph{faulty}. A run is called \emph{faulty} if it contains at least one faulty transition.
  Define $\dt_{\eqref{quoteC_Resilience_LRTA}}$ as the set of 
  $((q_1,x_1),(q_2,x_2))$ such that $q_1,q_2,x_1,x_2$ satisfy \eqref{quoteC_Resilience_LRTA}.
\end{definition}

\begin{remark}\label{rem1_Resilience_LRTA}
  Note that an LRTA $\Acal$ may have infinitely many concurrent compositions $\CC(\Acal,\Acal_{\nf\obs})$ because
  $\Acal_{\nf}$ may have infinitely many observers $\Acal_{\nf\obs}$.
\end{remark}

By \autoref{prop1_Resilience_LRTA}, the following result holds.

\begin{theorem}\label{thm2_Resilience_LRTA}
  A concurrent composition $\CC(\Acal,\Acal_{\nf\obs})$ can be computed in $\mathsf{N}$2$\EXPTIME$ in the size of $\Acal$. $\dt_{\fsf}$, and $\dt_{\eqref{quoteC_Resilience_LRTA}}$ can also be computed in $\mathsf{N}$2$\EXPTIME$.
\end{theorem}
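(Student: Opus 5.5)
The plan is to reduce the computation of $\CC(\Acal,\Acal_{\nf\obs})$, $\dt_{\fsf}$ and $\dt_{\eqref{quoteC_Resilience_LRTA}}$ to finitely many instances of the checks in \autoref{prop1_Resilience_LRTA}. Nothing beyond enumeration and bookkeeping should be needed.

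First, I enumerate the state space $Q\times 2^Q$ of \eqref{CC-RTA-RTAObs}. It has $|Q|2^{|Q|}$ elements, so the number of candidate pairs $((q_1,x_1),(q_2,x_2))$ is $|Q|^2 4^{|Q|}$, which is singly exponential in the size of $\Acal$. For each candidate pair I decide whether $q_1,q_2,x_1,x_2$ satisfy \eqref{quoteB_Resilience_LRTA}. By \autoref{prop1_Resilience_LRTA} each such decision is in $\mathsf{N}$2$\EXPTIME$. Running a single-exponential loop of $\mathsf{N}$2$\EXPTIME$ procedures stays in $\mathsf{N}$2$\EXPTIME$: the nondeterministic guesses of the successive calls are simply concatenated, and the total running time is still bounded by a doubly exponential function. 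The initial set $Q_0\times\{x_{0\nf}\}$ is obtained in $\PTIME$, since $x_{0\nf}=\Mt(\Acal_{\nf},(\ep,0))$. This yields $\dt$ and hence a concurrent composition. By \autoref{rem1_Resilience_LRTA}, producing one witness transition (one rational label $(\sigma,t)$ per edge, as in \autoref{def_observer_RTautomata}) is enough, and the definition of $\dt$ depends only on existence.

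Second, for $\dt_{\fsf}$ I apply the same procedure with the first clause of \eqref{quoteB_Resilience_LRTA} strengthened: the run $q_1\xrightarrow[]{s_2}q_2$ must contain a faulty event. I would implement this with the standard product of $\Acal$ with a two-state flag automaton, tracking $Q\times\{0,1\}$ and setting the flag upon any event in $\Ef$. I then run the \autoref{prop1_Resilience_LRTA} check on the enlarged automaton, whose size is at most doubled, so the complexity class is unchanged. For $\dt_{\eqref{quoteC_Resilience_LRTA}}$, \autoref{prop1_Resilience_LRTA} already covers \eqref{quoteC_Resilience_LRTA}, so I enumerate the same pairs and invoke it directly. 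The extra condition involving $\Rcal$ only needs $\Rcal$ as a precomputed finite set, and $\Rcal$ is computable as discussed after \autoref{def_RP_LRTA}.

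I expect the main obstacle to be the interaction between the ``for all / there exists no run'' clauses and the shared timing $t$. All requirements for a fixed $(\sigma,t)$ must be witnessed simultaneously, and $t$ ranges over $\Q_{\ge0}$, which is infinite. This is exactly the difficulty hidden inside \autoref{prop1_Resilience_LRTA}, whose procedure partitions the time axis into finitely many intervals with rational endpoints on which reachability sets are constant. Since I am allowed to assume that proposition, the remaining care is to argue that the enumeration and the fault-flag product preserve the $\mathsf{N}$2$\EXPTIME$ bound. A second point needing care is that $\Rcal$ itself may be costly to compute. If its computation exceeded the bound, I would state the result for $\dt_{\eqref{quoteC_Resilience_LRTA}}$ relative to a given $\Rcal$, as the statement implicitly does.
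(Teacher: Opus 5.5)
Your proposal is correct and takes essentially the same route as the paper. The paper's proof is just the observation that \autoref{prop1_Resilience_LRTA} applies to each of the singly exponentially many candidate quadruples $(q_1,x_1,q_2,x_2)$ that define $\dt$, $\dt_{\fsf}$ and $\dt_{\eqref{quoteC_Resilience_LRTA}}$. Your fault-flag product for $\dt_{\fsf}$ (with the observer-side sets taken as $x_i\times\{0\}$ in the product) and your caveat that $\Rcal$ must be treated as given only make explicit details the paper leaves implicit.
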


By $\CC(\Acal,\Acal_{\nf\obs})$ and \autoref{thm1_Resilience_LRTA}, we give a verifiable necessary and sufficient condition for the negation of resilience of $\Acal$.

\begin{theorem}\label{thm3_Resilience_LRTA}
  Consider an LRTA $\Acal$, a set $\Ef$ of faulty events, and a set $\Rcal$ of recovery points.
  The LRTA $\Acal$ is not resilient with respect to $\Ef$ and $\Rcal$, if and only if, in a concurrent composition $\CC(\Acal,\Acal_{\nf\obs})$ there is a run
  \begin{align}\label{eqn4_Resilience_LRTA}
	(q_0,x_{0\nf}) \to^* (q_1,x_1) \to^* (q_2,x_2) \to^* (q_2,x_2),
  \end{align}
where $(q_0,x_{0\nf}) \to^* (q_1,x_1)$ is a faulty run, the transitions of the run $(q_1,x_1) \to^* (q_2,x_2) \to^* (q_2,x_2)$ belongs to $\dt_{\eqref{quoteC_Resilience_LRTA}}$.
\end{theorem}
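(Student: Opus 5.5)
The plan is to reduce \autoref{thm3_Resilience_LRTA} to \autoref{thm1_Resilience_LRTA}. The bridge is a correspondence between runs of $\CC(\Acal,\Acal_{\nf\obs})$ and the paired runs \eqref{eqn2_Resilience_LRTA}--\eqref{eqn3_Resilience_LRTA} of $\Acal$ and the pre-observer $\Acal_{\nf\obs}^{\pre}$. By \autoref{lem2_Resilience_LRTA}, a transition $((q_1,x_1),(q_2,x_2))\in\dt$ exists exactly when \eqref{quoteA_Resilience_LRTA} holds. That is, there is a segment $q_1\xrightarrow{s_2}q_2$ of $\Acal$ with exactly one observable event, followed by instantaneous unobservable events, whose observation drives the pre-observer of $\Acalnf$ from $x_1$ to $x_2$. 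Similarly, $\dt_{\fsf}$ marks the segments that contain a faulty event, and $\dt_{\eqref{quoteC_Resilience_LRTA}}$ marks the segments that end at the observable event and whose instantaneous unobservable prefix avoids $\Rcal$ against $x_1$. I will read the reachability condition in \eqref{quoteC_Resilience_LRTA} as referring to the prefix of $q_1\xrightarrow{s_2}q_2$, matching \autoref{thm1_Resilience_LRTA}.

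\emph{``Only if''.} Start from the run \eqref{eqn2_Resilience_LRTA} supplied by \autoref{thm1_Resilience_LRTA}. Each segment $q_{i-1}\xrightarrow{s_i}q_i$ together with $x_{i-1}\to x_i$ in \eqref{eqn3_Resilience_LRTA} witnesses \eqref{quoteA_Resilience_LRTA}, hence a transition $((q_{i-1},x_{i-1}),(q_i,x_i))\in\dt$. The faulty prefix $q_0\to^*q_n$ contains a faulty event in some segment, so $(q_0,x_{0\nf})\to^*(q_n,x_n)$ is a faulty run. For $j>n$ the segments lie in $(E_{uo}\times\R_{\ge0})^*(E_o\times\R_{\ge0})$, and the condition $(\{q'_{n+i}\}\times x_{n+i})\cap\Rcal=\emptyset$ gives membership in $\dt_{\eqref{quoteC_Resilience_LRTA}}$. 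Since $q_{n+m}=q_{n+m+p}$ and $x_{n+m}=x_{n+m+p}$, we obtain the lasso \eqref{eqn4_Resilience_LRTA} with $(q_2,x_2)=(q_{n+m},x_{n+m})$ and $p>0$. The only subtlety is that the $x_i$ are states of the pre-observer rather than of a chosen observer. This is harmless because $\CC(\Acal,\Acal_{\nf\obs})$ is defined on all of $Q\times 2^Q$ through \eqref{quoteB_Resilience_LRTA}, which quantifies over real delays (made rational by \autoref{lem2_Resilience_LRTA}). It follows that \emph{every} concurrent composition contains the lasso, so in particular some one does.

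\emph{``If''.} Take the lasso \eqref{eqn4_Resilience_LRTA} in a concurrent composition. For each transition, choose a witnessing segment via \eqref{quoteB_Resilience_LRTA} or \eqref{quoteC_Resilience_LRTA}; for the faulty prefix, choose faulty witnesses where the transition is in $\dt_{\fsf}$. Concatenating them gives runs in $\Acal$ and in $\Acal_{\nf\obs}^{\pre}$. Concatenation is legitimate because each segment ends in the state where the next begins. Timing is consistent because each label of \eqref{quoteB_Resilience_LRTA} is $(\sigma,t)$ with $t$ relative to the previous observation, and the pre-observer is deterministic, so it sees $\ell(\tau)$ of the concatenation, i.e.\ $\ell(\nu(\cdot))$ from \eqref{eqn1_Resilience_LRTA}. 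The cycle part yields $q_{n+m}=q_{n+m+p}$ and $x_{n+m}=x_{n+m+p}$ with $p\ge1$ and a nonempty $s_{n+m+1}\dots s_{n+m+p}$. The \eqref{quoteC_Resilience_LRTA} conditions give the recovery-point avoidance required in \autoref{thm1_Resilience_LRTA}, so $\Acal$ is not resilient.

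\emph{Main obstacle.} The delicate step is the boundary between the faulty prefix and the \eqref{quoteC_Resilience_LRTA} part. In \autoref{thm1_Resilience_LRTA}, segments $s_1,\dots,s_n$ may end with instantaneous unobservable tails, while $s_{n+1},\dots$ end at the observable event. The instantaneous tail after each observation of the suffix is then carried as the prefix of the next segment, and it is exactly this prefix that \eqref{quoteC_Resilience_LRTA} checks against $x_1$.

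When converting the lasso back, the last faulty transition may be witnessed by a segment in $\dt\setminus\dt_{\eqref{quoteC_Resilience_LRTA}}$ that ends after such a tail. The first \eqref{quoteC_Resilience_LRTA} segment then starts from that tail's endpoint. Its own prefix check covers the states $q'_{n+i}$ reached from $q_{n+i}$, which is precisely what \autoref{thm1_Resilience_LRTA} demands at index $i=0$.

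I would verify carefully that at every index $i\in\llb0,m+p-1\rrb$ the pair being checked is $(q'_{n+i},x_{n+i})$. The $x$ in that pair is the estimate before the next observation, which is constant along the instantaneous unobservable stretch. This point is where I expect most of the bookkeeping.
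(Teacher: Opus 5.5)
Your proposal is correct and takes the same route as the paper. The paper's proof consists only of the remark that the result follows by matching the transitions of $\CC(\Acal,\Acal_{\nf\obs})$, defined through \eqref{quoteB_Resilience_LRTA}, \eqref{quoteC_Resilience_LRTA} and \autoref{lem2_Resilience_LRTA}, against the segment-wise pair of runs \eqref{eqn2_Resilience_LRTA}--\eqref{eqn3_Resilience_LRTA} in \autoref{thm1_Resilience_LRTA}; your two directions spell out that comparison in more detail, including the boundary between the faulty prefix and the \eqref{quoteC_Resilience_LRTA}-segments, and your reading of the instantaneous prefix in \eqref{quoteC_Resilience_LRTA} as starting at $q_1$ (the paper's $q_2$ there is a typo).
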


\begin{proof}
  By comparing the details of $\CC(\Acal,\Acal_{\nf\obs})$ and the condition in \autoref{thm1_Resilience_LRTA}, this result holds. 
\end{proof}

\begin{remark}\label{rem2_Resilience_LRTA}
  Although an LRTA $\Acal$ may have infinitely many concurrent compositions $\CC(\Acal,\Acal_{\nf\obs})$, any concurrent composition can verify the resilience of $\Acal$ by the proof of \autoref{thm3_Resilience_LRTA}.
\end{remark}

By \autoref{thm2_Resilience_LRTA} and \autoref{thm3_Resilience_LRTA}, the negation of resilience of an LRTA $\Acal$ can 
be verified in $\mathsf{N}$2$\EXPTIME$.

\begin{example}\label{exam1_Resilience_LRTA}
\refstepcounter{couRTA}
	Consider the LRTA $\Acal_{\thecouRTA\label{couRTA4}}$ shown in \autoref{fig1_Resilience_LRTA}, where $\Ef=\{\ef\}$, 
	$E_{uo}=\{u,\ef\}$, $E_{o}=\{a\}$, $\ell(a)=a$. The subautomaton $\Acal_{\ref{couRTA4}\nf}$ is shown in \autoref{fig2_Resilience_LRTA}. Consider $\Rcal_I=\{(q_4,q_5)\}$.

	We next check whether $\Acal_{\ref{couRTA4}}$ is resilient with respect to 
	$\Ef$ and $\Rcal_I$. Consider faulty run $q_0 \xrightarrow[]{\ef/1} q_1
	\xrightarrow[]{a/1} q_4$, its timed label sequence is $(a,2)$. In addition, one has
	$\Mt(\Acal_{\ref{couRTA4}\nf},(a,2))=\{q_6\}$. One has $(q_4,q_6)\notin\Rcal_I$.
	In addition, one can find the following two runs in $\Acal$ and the pre-observer $\Acal_{\nf\obs}^{\pre}$, respectively:

	\begin{center}
	  \begin{tikzcd}[column sep=18pt, row sep=small]
		\text{in }\Acal: & q_0 \arrow[r, "\ef/1"] & q_1 \arrow[r, "a/1"] & q_4 \arrow[r, "u/0"] & q_7 \arrow[r, "a/1"] & q_7 \arrow[r, "a/1"] & q_7 \arrow[r, "a/1"] & \cdots, \\
		\text{in }\Acal_{\nf\obs}^{\pre}: & \{q_0\} \arrow[rr, "(a{,}2)"]  && \{q_6\} \arrow[rr, "(a{,}1)"]  && \emptyset \arrow[r,"(a{,}1)"] & \emptyset \arrow[r,"(a{,}1)"] & \cdots.
  	  \end{tikzcd}
    \end{center}

  The two runs correspond to \eqref{eqn2_Resilience_LRTA} and \eqref{eqn3_Resilience_LRTA}, respectively, because the $q_4$ here corresponds to the $q_n$ in \eqref{eqn2_Resilience_LRTA}, the $\{q_6\}$ here corresponds to the $x_n$ in \eqref{eqn3_Resilience_LRTA}. For pair $(q_4,\{q_6\})$, one has $(q_4,q_6)\notin\Rcal_I$; for pair $(q_7,\{q_6\})$, one has $(q_7,q_6)\notin\Rcal_I$; for pair $(q_7,\emptyset)$, one has $(\{q_7\}\times\emptyset)\cap \Rcal_I=\emptyset$. Then the condition in \autoref{thm1_Resilience_LRTA} is satisfied, one concludes that $\Acal_{\ref{couRTA4}}$ is not resilient with respect to $\Ef$ and $\Rcal_I$.

	On the other hand, from a part of
	concurrent composition $\CC(\Acal_{\ref{couRTA4}},\Acal_{\ref{couRTA4}\nf\obs})$ as 
	in \autoref{fig3_Resilience_LRTA}, one can find the following run
	\begin{center}
	  \begin{tikzcd}[column sep=18pt]
		(q_0,\{q_0\}) \arrow{r} & (q_4,\{q_6\}) \arrow{r} & (q_7,\emptyset) \arrow{r} & (q_7,\emptyset) \arrow{r} & \cdots
	  \end{tikzcd}
	\end{center}
	which corresponds to \eqref{eqn4_Resilience_LRTA}, because transition $(q_0,\{q_0\}) \to (q_4,\{q_6\})$ is faulty,
	transition $(q_4,\{q_6\}) \to (q_7,\emptyset)$ and transition $(q_7,\emptyset) \to (q_7,\emptyset)$
	belong to $\dt_{\eqref{quoteC_Resilience_LRTA}}$. Then the condition in
	\autoref{thm3_Resilience_LRTA} is satisfied, one also concludes that $\Acal_{\ref{couRTA4}}$ is not resilient with respect to $\Ef$ and $\Rcal_I$.

 	\begin{figure}[!htbp]
	\centering
	\begin{tikzpicture}
	[>=stealth',shorten >=1pt,thick,auto,node distance=2.5 cm, scale = 0.7, transform shape,
	->,>=stealth,inner sep=2pt]

	\tikzstyle{emptynode}=[inner sep=0,outer sep=0]

	\node[initial, state, initial where = left] (q0) {$q_0$};
	\node[state] (q2) [right of = q0] {$q_2$};
	\node[state] (q1) [above of = q2] {$q_1$};
	\node[state] (q3) [below of = q2] {$q_3$};
	\node[state] (q5) [right of = q2] {$q_5$};
	\node[state] (q4) [above of = q5] {$q_4$};
	\node[state] (q6) [below of = q5] {$q_6$};
	\node[state] (q7) [right of = q4] {$q_7$};
	\node[state] (q8) [right of = q5] {$q_8$};

	\path [->]
	(q0) edge node [above, sloped] {$\ef/[1,1]$} (q1)
	(q0) edge node [above, sloped] {$u/[0,1]$} (q2)
	(q0) edge node [above, sloped] {$u/[0,2]$} (q3)
	(q1) edge node [above, sloped] {$a/[0,1]$} (q4)
	(q2) edge node [above, sloped] {$a/[0,0]$} (q5)
	(q3) edge node [above, sloped] {$a/[0,0]$} (q6)
	(q4) edge node [above, sloped] {$u/[0,1]$} (q7)
	(q5) edge node [above, sloped] {$u/[0,1]$} (q8)
	(q7) edge [loop right] node [right] {$a/[0,1]$} (q7)
	(q8) edge [loop right] node [right] {$a/[0,1]$} (q8)
	(q8) edge [loop above] node {$u/[1,1]$} (q8)
	;

	 \end{tikzpicture}
	 \caption{An LRTA $\Acal_{\ref{couRTA4}}$, $\ell(u)=\ell(\ef)=\ep$, $\ell(a)=a$.}
	 \label{fig1_Resilience_LRTA}
   \end{figure}

   \begin{figure}[!htbp]
	 \centering
	\begin{tikzpicture}
	[>=stealth',shorten >=1pt,thick,auto,node distance=2.5 cm, scale = 0.7, transform shape,
	->,>=stealth,inner sep=2pt]

	\tikzstyle{emptynode}=[inner sep=0,outer sep=0]

	\node[initial, state, initial where = left] (q0) {$q_0$};
	\node[state] (q2) [right of = q0] {$q_2$};
	\node[state] (q3) [below of = q2] {$q_3$};
	\node[state] (q5) [right of = q2] {$q_5$};
	\node[state] (q6) [below of = q5] {$q_6$};
	\node[state] (q8) [right of = q5] {$q_8$};

	\path [->]
	(q0) edge node [above, sloped] {$u/[0,1]$} (q2)
	(q0) edge node [above, sloped] {$u/[0,2]$} (q3)
	(q2) edge node [above, sloped] {$a/[0,0]$} (q5)
	(q3) edge node [above, sloped] {$a/[0,0]$} (q6)
	(q5) edge node [above, sloped] {$u/[0,1]$} (q8)
	(q8) edge [loop right] node [right] {$a/[0,1]$} (q8)
	(q8) edge [loop above] node {$u/[1,1]$} (q8)
	;

	 \end{tikzpicture}
   \caption{The LRTA $\Acal_{\ref{couRTA4}\nf}$, $\ell(u)=\ell(\ef)=\ep$, $\ell(a)=a$.}
	 \label{fig2_Resilience_LRTA}
	\end{figure}

	\begin{figure}[!htbp]
		\centering
	\begin{tikzpicture}
	[>=stealth',shorten >=1pt,thick,auto,node distance=3.5 cm, scale = 0.7, transform shape,
	->,>=stealth,inner sep=2pt]

	\tikzstyle{emptynode}=[inner sep=0,outer sep=0]

	\node[initial, elliptic state, initial where = left] (00) {$(q_0,\{q_0\})$};
	\node[elliptic state] (46) [right of = 00] {$(q_4,\{q_6\})$};
	\node[elliptic state] (456) [below = 1cm of 46] {$(q_4,\{q_5,q_6\})$};
	\node[elliptic state] (7phi) [right of = 46] {$(q_7,\emptyset)$};
	\node[elliptic state] (78) [right of = 456] {$(q_7,\{q_8\})$};

	\path [->]
	(00) edge (46)
	(00) edge (456)
	(46) edge (7phi)
	(456) edge (78)
	(7phi) edge [loop right] (7phi)
	(78) edge [loop right] (78)
	;

	 \end{tikzpicture}
	 \caption{A part of a concurrent composition $\CC(\Acal_{\ref{couRTA4}},\Acal_{\ref{couRTA4}\nf\obs})$.}
	 \label{fig3_Resilience_LRTA}
	\end{figure}
\end{example}

  \refstepcounter{couRTA}
\begin{example}\label{exam2_Resilience_LRTA}
  By slightly changing $\Acal_{\ref{couRTA4}}$, we will get a resilient LRTA. After changing the transition
  $q_2 \xrightarrow[]{a/[0,0]} q_5$ to $q_2 \xrightarrow[]{a/[1,1]} q_5$, we get LRTA $\Acal_{\thecouRTA\label{couRTA5}}$
  as in \autoref{fig5_Resilience_LRTA}. Also consider $\Ef=\{\ef\}$ and $\Rcal_I = \{(q_4,q_5)\}$.
  The LRTA $\Acal_{\ref{couRTA5}\nf}$ is shown in \autoref{fig6_Resilience_LRTA},
  part of the concurrent composition $\CC(\Acal_{\ref{couRTA5}},\Acal_{\ref{couRTA5}\nf\obs})$ is shown in \autoref{fig7_Resilience_LRTA}.
  In $\CC(\Acal_{\ref{couRTA5}},\Acal_{\ref{couRTA5}\nf\obs})$, one can find the unique faulty transition
  $(q_0,\{q_0\}) \to (q_4,\{q_5,q_6\})$, where $(q_4,q_5)\in \Rcal_I$. Whenever the faulty event $\ef$ occurs,
  and subsequently whenever $\Acal_{\ref{couRTA5}}$ transitions to state $q_4$ via run
  $q_0 \xrightarrow[]{\ef/1} q_1 \xrightarrow[]{a/t} q_4 $, where $t$ must be in $[0,1]$,
  $\Acal_{\ref{couRTA5}}$ can return to normalcy via run $q_0 \xrightarrow[]{u/t} q_2 \xrightarrow[]{a/1} q_5$,
  and the occurrence of $\ef$ cannot be released because both runs produce timed label sequence $(a,1+t)$.
  There is no run which corresponds to \eqref{eqn4_Resilience_LRTA}. Then by \autoref{thm3_Resilience_LRTA}, 
  $\Acal_{\ref{couRTA5}}$ is resilient with respect to $\Ef$ and $\Rcal_I$.

 \begin{figure}[!htbp]
	\centering
	\begin{tikzpicture}
	[>=stealth',shorten >=1pt,thick,auto,node distance=2.5 cm, scale = 0.7, transform shape,
	->,>=stealth,inner sep=2pt]

	\tikzstyle{emptynode}=[inner sep=0,outer sep=0]

	\node[initial, state, initial where = left] (q0) {$q_0$};
	\node[state] (q2) [right of = q0] {$q_2$};
	\node[state] (q1) [above of = q2] {$q_1$};
	\node[state] (q3) [below of = q2] {$q_3$};
	\node[state] (q5) [right of = q2] {$q_5$};
	\node[state] (q4) [above of = q5] {$q_4$};
	\node[state] (q6) [below of = q5] {$q_6$};
	\node[state] (q7) [right of = q4] {$q_7$};
	\node[state] (q8) [right of = q5] {$q_8$};

	\path [->]
	(q0) edge node [above, sloped] {$\ef/[1,1]$} (q1)
	(q0) edge node [above, sloped] {$u/[0,1]$} (q2)
	(q0) edge node [above, sloped] {$u/[0,2]$} (q3)
	(q1) edge node [above, sloped] {$a/[0,1]$} (q4)
	(q2) edge node [above, sloped] {$a/[1,1]$} (q5)
	(q3) edge node [above, sloped] {$a/[0,0]$} (q6)
	(q4) edge node [above, sloped] {$u/[0,1]$} (q7)
	(q5) edge node [above, sloped] {$u/[0,1]$} (q8)
	(q7) edge [loop right] node [right] {$a/[0,1]$} (q7)
	(q8) edge [loop right] node [right] {$a/[0,1]$} (q8)
	(q8) edge [loop above] node {$u/[1,1]$} (q8)
	;

	 \end{tikzpicture}
	 \caption{An LRTA $\Acal_{\ref{couRTA5}}$, $\ell(u)=\ell(\ef)=\ep$, $\ell(a)=a$.}
	 \label{fig5_Resilience_LRTA}
   \end{figure}

   \begin{figure}[!htbp]
	 \centering
	\begin{tikzpicture}
	[>=stealth',shorten >=1pt,thick,auto,node distance=2.5 cm, scale = 0.7, transform shape,
	->,>=stealth,inner sep=2pt]

	\tikzstyle{emptynode}=[inner sep=0,outer sep=0]

	\node[initial, state, initial where = left] (q0) {$q_0$};
	\node[state] (q2) [right of = q0] {$q_2$};
	\node[state] (q3) [below of = q2] {$q_3$};
	\node[state] (q5) [right of = q2] {$q_5$};
	\node[state] (q6) [below of = q5] {$q_6$};
	\node[state] (q8) [right of = q5] {$q_8$};

	\path [->]
	(q0) edge node [above, sloped] {$u/[0,1]$} (q2)
	(q0) edge node [above, sloped] {$u/[0,2]$} (q3)
	(q2) edge node [above, sloped] {$a/[1,1]$} (q5)
	(q3) edge node [above, sloped] {$a/[0,0]$} (q6)
	(q5) edge node [above, sloped] {$u/[0,1]$} (q8)
	(q8) edge [loop right] node [right] {$a/[0,1]$} (q8)
	(q8) edge [loop above] node {$u/[1,1]$} (q8)
	;

	 \end{tikzpicture}
   \caption{The LRTA $\Acal_{\ref{couRTA5}\nf}$, $\ell(u)=\ell(\ef)=\ep$, $\ell(a)=a$.}
	 \label{fig6_Resilience_LRTA}
	\end{figure}

	\begin{figure}[!htbp]
		\centering
	\begin{tikzpicture}
	[>=stealth',shorten >=1pt,thick,auto,node distance=3.5 cm, scale = 0.7, transform shape,
	->,>=stealth,inner sep=2pt]

	\tikzstyle{emptynode}=[inner sep=0,outer sep=0]

	\node[initial, elliptic state, initial where = left] (00) {$(q_0,\{q_0\})$};
	\node[elliptic state] (456) [right = 1cm of 00] {$(q_4,\{q_5,q_6\})$};
	\node[elliptic state] (78) [right of = 456] {$(q_7,\{q_8\})$};

	\path [->]
	(00) edge (456)
	(456) edge (78)
	(78) edge [loop right] (78)
	;

	 \end{tikzpicture}
	 \caption{A part of a concurrent composition $\CC(\Acal_{\ref{couRTA5}},\Acal_{\ref{couRTA5}\nf\obs})$.}
	 \label{fig7_Resilience_LRTA}
	\end{figure}

\end{example}

\section{Conclusion}
\label{sec:conc}

In this paper, we defined a notion of resilience for a labeled real-time automaton,
and designed an algorithm for verifying the resilience in $\mathsf{N}$2$\EXPTIME$ 
based on the basic tools --- concurrent composition \cite{Yan2024DiagnosisReal-TimeAutomata} and observer \cite{Zhang2024OpacityReal-TimeAutomata}. Related future topics include exploring other interesting notions of resilience, applications to practical scenarios 
such as how the notions of resilience describe the behavior of intelligent agents, etc.


\end{document}